\documentclass[preprint,12pt,a4paper,twoside,fleqn,sort&compress]{elsarticle}
\usepackage{amssymb,latexsym,amscd,amsmath,amsthm}
\usepackage{pst-all}
\usepackage[varg]{pxfonts}
\usepackage{mathrsfs}
\usepackage{shuffle}
\usepackage{hyperref}
\usepackage{times}

\usepackage{color} 

\usepackage{mathrsfs}
\usepackage{graphicx}

\usepackage{helvet}
\usepackage{courier}

\usepackage{algorithm}
\usepackage[noend]{algorithmic}
\usepackage{verbatim}
\usepackage{flafter}
\usepackage{subcaption} 
\topmargin-0.5in \oddsidemargin 0mm \evensidemargin 0mm





\theoremstyle{definition}
\newtheorem{definition}{Definition}
\theoremstyle{plain}
\newtheorem{proposition}{Proposition}
\newtheorem{theorem}{Theorem}
\newtheorem*{thm-other}{Theorem}
\newtheorem{example}{Example}
\newtheorem{corollary}{Corollary}
\newtheorem{lemma}{Lemma}
\theoremstyle{remark}
\newtheorem{remark}{Remark}

\begin{document}

\journal{arXiv}

\begin{frontmatter}


\title{Possibilistic Computation Tree Logic: Decidability and Complete Axiomatization
\thanks{This work was partially supported by National Science
Foundation of China (Grant No: 12471437).}}

\author {Yongming Li\corref{cor1}}
\ead{liyongm@snnu.edu.cn}

\address {School of Mathematics and Statistics, Shaanxi Normal University, Xi'an, 710119, China}
\cortext[cor1]{Corresponding Author}
\begin{abstract}
Possibilistic computation tree Logic (PoCTL) is one kind of branching temporal logic combined with uncertain information in possibility theory, which was introduced in order to cope with the systematic verification on systems with uncertain information in possibility theory. There are two decision problems related to PoCTL: the model checking problem and the satisfiability problem.
The model checking problem of PoCTL has been studied, while the satisfiability problem of PoCTL was not discussed. One of the purpose of this work is to study the satisfiability problem of PoCTL. By introducing some techniques to extract possibility information from PoCTL formulae and constructing their possibilistic Hintikka structures, we show that the satisfiability problem of PoCTL is decidable in exponential time. Furthermore, we give a complete axiomatization of PoCTL, which is another important inference problem of PoCTL.
\end{abstract}

\begin{keyword}Computation tree logic;
possibilistic Kripke structure;  possibility measure;  the satisfiability problem;  axiomatization.
\end{keyword}

\end{frontmatter}

\baselineskip 20pt

\section{Introduction}

Stared with the work of Arther Prior \cite{Prior1957}, temporal logics serve to analysis and reason about system behaviors changing with time. As an extension of classical logic, it incorporates temporal operators that enable the expression of propositions about the order of events and temporal relationships. Temporal logics are useful for the specification and verification of computation systems, such as hardware and software systems (\cite{Pnueli77,Clarke86,EGP92,Baier08}). 
However, the temporal behavior of many real-world systems is not entirely predictable and may be influenced by uncertainty such as probability and possibility. This phenomenon is occurred often in the diagnosis of intelligent systems and expert systems (c.f. \cite{Almagor16,Bouyer23}). A typical example is the diagnosis of a patient, in which the states of the patient and the changes of the states of the patient under the treatment are not certainty known \cite{li13}, it is dependent on the data or the observation of the doctors and medial instruments.

In this work, we focus on the branching temporal logic in terms of possibilistic logic. We have introduced the possibilistic computational tree logic (PoCTL) in \cite{PoCTL2015,Xue11}. In fact,
PoCTL is a possibilistic extension of the well-known branching-time logic obtained by replacing the existence and universal path quantifiers with the possibilistic operators, which allow to quantify the possibility of all runs that satisfy a given path formula. Formally, PoCTL formulae are interpreted over possibilistic Kripke structures (PKSs) where each state is assigned a subset of atomic propositions that are valid in a given state.

PoCTL is a possibility measure extension of classical CTL. Both the possibilistic and probabilistic CTL solve certain uncertainty of errors or other stochastic behaviors occurring in various real-world applications. As shown in \cite{Baier08}, probabilistic CTL and CTL are not comparable with respect to their expressiveness, while CTL are the proper sub-logic of PoCTL with respect to their expressiveness as shown in \cite{PoCTL2015}. This allows possibilistic CTL to be used to represent more instants of real-word systems compared with the classical and probabilistic CTL. Using this advantage of PoCTL, we can use PoCTL to solve the model-checking problems of real-world applications, which can not be tackled by classical model-checking algorithms \cite{PoCTL2015}.

As a logic system, there are three logic decision problems for PoCTL, the satisfiability problem, the validity problem and the model checking problem. The model checking problem of PoCTL had been studied in \cite{PoCTL2015} in detail. Whereas the satisfiability and validity problems of PoCTL are not solved. Since the satisfiability and validity problems are dual to each other, we only consider the satisfiability problem of PoCTL, which ask for a model for a given PoCTL formula.

Axiomatization is another important aspect of logic system. Up to now, there is few work on the axiomatization of possibilistic temporal logic, especially the axiomatization of PoCTL is not studied. We shall discuss the axiomatization of PoCTL in this paper.

For the related work, the satisfiability and axiomatization of classical temporal logics was studied extensively, we refer to \cite{EH85,Emerson90,Demri16} for the details and references therein.
The satisfiability and axiomatization of probabilistic temporal logics had been studied recently. Ognjanovi\'{c} introduced discrete linear-time probabilistic logics and provided a sound and complete infinitary axiomatic system \cite{PLTL2006}. Their study focuses on finitely additive probability measures. Doder et al. \cite{Doder2024} proposed a temporal logic, called $PL_{LTL}$, based on polynomial weight formulas (PWFs) and LTL. $PL_{LTL}$ enables reasoning about probabilistic events. They provided $\sigma$-additive semantics for $PL_{LTL}$ and proved that its axiomatic system is sound and strongly complete. Along with this direction, He et al. studied the complete axiomatization of posibilistic linear temporal logic (PoLTL). For the axiomatization of probabilistic branching-time logic, in \cite{PCTL*-axiomatization}, Ognjanovi\'{c} et al. introduced two probabilistic operators into CTL$^*$ and provided a sound and strongly complete infinitary axiomatic system. However, the decidability of satisfiability and validity problems of probabilistic computation tree logic (PCTL) remained unresolved for 30 years, which was solved recently by Chodil and Ku\u{c}era. It was shown that satisfiability and validity problems of PCTL both are highly undecidable \cite{Chodil25}. There is no good way to extract enough probability information contained in a PCTL formula to get its model, and it is not possible to define the quantity of probability of the model of the PCTL formula. For the satisfiability and axiomatization of PoCTL, the difficulty also lies in how to define the quantity of possibility constrain of the model of a PoCTL formula. Fortunately, we provide some ingenious techniques to extract possibility information from the PoCTL formula to get its model, combined with the classical method for the satisfiability and axiomatization of CTL (\cite{EH85}), complete solve the satisfiability and axiomatization of PoCTL and show that PoCTL has small model property and tree model property.

The content of this paper is arranged as follows. In Section 2 we recall the notion of possibilistic Kripke structures, the related possibility measures induced by the possibilistic Kripke structures, and the main notions of PoCTL and the positive normal form of PoCTL. In Section 3, the satisfiability problem and the small model theorem of PoCTL are studied.
An axiomatization system of PoCTL is presented in Section 4.
The paper ends with conclusion section.

\section {Preliminaries}
\subsection{Possibility theory}
Possibility theory, introduced by Zadeh in 1978 as an extension of fuzzy sets and fuzzy logic \cite{Zadeh78}, provides a framework for dealing with vagueness and imprecision in systems. Unlike probability theory, possibility theory employs a pair of dual set functions (possibility and necessity measures) instead of only one. Further studies were conducted by Dubois et al. in \cite{Dubois88,Dubois01}.

 For simplicity, assume that the universe of discourse $U$ is a
 nonempty set, and assume that all subsets are measurable. A
 {possibility measure is a function $Po$ from the powerset $2^U$ to
 $[0, 1]$ such that:

 1) $Po(\emptyset)=0$, 2)$Po(U)=1$, and 3) $Po(\bigcup_{i\in I} E_i)={\rm sup}_{i\in I} Po(E_i)$ for
 any subset family $\{E_i\}$ of the universe set $U$, where we use
 ${\rm sup}_{i\in I}a_i$ to denote the supremum or the least upper bound of
 the family of real numbers $\{a_i\}_{i\in I}$; dually, we use ${\rm inf}_{i\in I}a_i$ to
 denote the infimum or the largest lower bound of the family of
 real numbers $\{a_i\}_{i\in I}$.

 Whereas probability theory uses a single number, the probability, to describe how likely an event is to occur, possibility
 theory uses two concepts: the possibility and the necessity of
 the event (\cite{Dubois88,grabisch00}).

 For any set $E$, the necessity measure $Ne$ is
 defined by
\begin{equation}
  Ne(E)=1-Po(U-E).
\end{equation}

 A necessity measure is a function $Ne$ from the powerset $2^U$ to
 $[0, 1]$ such that:

 1) $Ne(\emptyset)=0$, 2)$Ne(U)=1$, and 3)$Ne(\bigcap_{i\in I} E_i)= {\rm inf}_{i\in I} Ne(E_i)$ for
 any subset family $\{E_i\}_{i\in I}$ of the universe set $U$.

  It follows that $Po(E)+Ne(U-E)=1$, and $Ne$ is the dual of
 $Po$ and vice versa. In general, $Po$ and $Ne$ are not self-dual, i.e., $Po\not=Ne$; this is
 contrary to probability measure, which is self-dual. As a result,
 we need both possibility measure and necessity measure to treat
 uncertainty in the theory of possibility.

  We shall use possibility measures and necessity measures in
 the logic PoCTL in this paper.

\subsection{PoCTL formulae and their positive normal  forms}

\subsubsection{Syntax of PoCTL}

\begin{definition} \cite{PoCTL2015} (Syntax of PoCTL) {\sl PoCTL state formulae} over the set $AP$ of atomic propositions are formed according to the following grammar:
\begin{center}
$\Phi ::= true\mid a \mid\Phi_{1} \wedge \Phi_{2}\mid \neg \Phi\mid Po_{J}(\varphi)$
\end{center}
where $a\in AP$, $\varphi$ is a PoCTL path formula and $J\subseteq [0,1]$ is an interval with rational bounds.

{\sl PoCTL path formulae} are formed according to the following grammar:

\begin{center}
$\varphi::=\bigcirc \Phi \mid \Phi_{1}\sqcup \Phi_{2}\mid \Phi_{1}\sqcup^{\leq n} \Phi_{2}\mid \square\Phi $
\end{center}
where $ \Phi$, $ \Phi_{1}$, and $ \Phi_{2}$ are state formulae and $n\in\mathbb{N}$.
\end{definition}

Let $\Phi_{1}$ and $\Phi_{2}$ be PoCTL formulae. Using the Boolean connectives $\wedge$ and $\neg$, the full power of propositional logic is obtained. Other Boolean connectives, such as disjunction $\vee$, implication $\rightarrow$, and equivalence $\leftrightarrow$, can be derived as follows:
\begin{equation*}
\begin{split}
   \Phi_{1}\vee\Phi_{2} & \overset{\text{def}}=\neg(\neg\Phi_{1}\wedge \neg\Phi_{2}), \\
    \Phi_{1}\rightarrow \Phi_{2} &\overset{\text{def}}=\neg\Phi_{1}\vee \Phi_{2}, \\
    \Phi_{1}\leftrightarrow \Phi_{2} & \overset{\text{def}}=(\Phi_{1}\rightarrow\Phi_{2})\wedge(\Phi_{2}\rightarrow \Phi_{1}).
\end{split}
\end{equation*}

\subsubsection{Semantics of PoCTL}

Possibilistic Kripke structures are key models of possibilistic temporal logic, which is defined as follows.

\begin{definition} \cite{li13}\label{de:pkripke}
A possibilistic Kripke structure (PKS, in short) is a tuple $M=(S,P,I,AP,L)$, where

(1) $S$  is a countable, nonempty set of states;

(2) $P:S\times S\longrightarrow [0,1]$ is the transition possibility distribution such that for all states $s$, ${\rm sup}_{s^{'}\in S}P(s,s^{'})=1$;

(3) $I:S\longrightarrow[0,1]$ is the initial distribution, such that ${\rm sup}_{s\in S}I(s)=1$;

(4) $AP$ is a set of atomic propositions;

(5) $L:S\longrightarrow 2^{AP}$  is a labeling function that labels a state $s$ with those atomic propositions in $AP$ that are supposed to hold in $s$.

Furthermore, if the set  $S$  and  $AP$ are finite sets, then $M=(S,P,I,AP,L)$ is called a
finite possibilistic Kripke structure.
\end{definition}

\begin{remark}\label{re: normal conditions}
If the normal conditions ${\rm sup}_{s^{'}\in S}P(s,s^{'})=1$ do not hold in a PKS $M$, to ensure the normal condition, we can add a new (trap) state $t$ to the model $M$ such that $P(s,t)=1, P(t,t)=1$ and $P(t,s)=0$ for all states $s\in S$, and $L(t)=\emptyset$. So in the following discussion, instead of normal conditions, we are more concerned with the total relation of $P$, i.e., for any state $s$, there is a state $t$ such that $P(s,t)>0$.
\end{remark}

Paths in a PKS $M$ are infinite paths in the underlying digraph. They are
defined as infinite state sequences $\pi=s_{0}s_{1}s_{2}\cdots\in S^{w}$  such that  $P(s_{i},s_{i+1})>0$ for all $i\in I$.
Let $Paths(M)$ denote the set of all paths in $M$. Let $Paths_M(s)$ ($Paths(s)$ if $M$ is understood) denote the set of all
paths in $M$ that start in state $s$.

\begin{definition}\label{def:possibility measure} \cite{li13} For a PKS $M$, a function $Po^M: Paths(M)\rightarrow [0,1]$ is defined as follows:
\begin{equation}\label{eq:possibility measure-path}
Po^M(\pi)={\rm inf}(\{I(s_{0})\}\cup\{ P(s_{i},s_{i+1}): i\geq 0\})
\end{equation}
for any $\pi=s_{0}s_{1} \cdots, \pi\in Paths(M).$
Furthermore, we define
\begin{equation}\label{eq:possibility measure}
Po^M(E)={\rm sup}\{Po^M(\pi): \pi\in E\}
\end{equation}
for any $E\subseteq Paths(M)$, then, we have a well-defined function $$Po^M:2^{Paths(M)}\longrightarrow [0,1],$$ $Po^M$ is called the possibility measure over $\Omega=2^{Paths(M)}$ as it has the properties stated in Theorem \ref{th:possibility measure}. If $M$ is clear from the context, then $M$ is omitted and we simply write $Po$ for $Po^M$.

\end{definition}

\begin{theorem}\label{th:possibility measure} \cite{li13} $Po$ is a possibility measure on $\Omega=2^{Paths(M)}$, i.e., $Po$ satisfies the following conditions:

 (1) $Po(\varnothing)=0$, $Po(Paths(M))=1$;

 (2) $Po(\bigcup\limits_{i\in I}A_{i})={\rm sup}\{Po(A_{i}): i\in I\}$ for any $A_{i}\in\Omega$, $i\in I$.

\end{theorem}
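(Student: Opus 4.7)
The plan is to verify the two conditions directly from the definitions of $Po(\pi)$ and $Po(E)$, relying essentially on the standard calculus of suprema and infima in $[0,1]$.

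For condition (1), the equality $Po(\varnothing)=0$ is the convention $\sup\varnothing=0$ in $[0,1]$. The upper bound $Po(Paths(M))\le 1$ is immediate, since $Po(\pi)\in[0,1]$ for every path. For the matching lower bound I would produce, for arbitrary $\varepsilon>0$, a witness path $\pi$ with $Po(\pi)>1-\varepsilon$: using the normal condition $\sup_{s\in S}I(s)=1$, pick $s_{0}$ with $I(s_{0})>1-\varepsilon$; then, inductively, using $\sup_{s'\in S}P(s_{i},s')=1$, pick $s_{i+1}$ with $P(s_{i},s_{i+1})>1-\varepsilon$. Every term of the infimum defining $Po(\pi)$ then exceeds $1-\varepsilon$, hence $Po(\pi)\ge 1-\varepsilon$; letting $\varepsilon\to 0$ closes the argument.

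For condition (2), the claim is essentially the associativity of supremum over a union of index sets:
\begin{equation*}
Po\Bigl(\bigcup_{i\in I}A_{i}\Bigr)=\sup\{Po(\pi):\pi\in\textstyle\bigcup_{i\in I}A_{i}\}=\sup_{i\in I}\sup\{Po(\pi):\pi\in A_{i}\}=\sup_{i\in I}Po(A_{i}).
\end{equation*}
This works for arbitrary (not necessarily countable) index sets $I$, giving the full $\sup$-continuity stated in the theorem.

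The main obstacle, modest as it is, lies in the second half of (1): the normal conditions yield only that the relevant suprema equal $1$, not that the value $1$ is attained, so one cannot simply exhibit a path with $Po(\pi)=1$. The fix is the uniform-$\varepsilon$ inductive construction above, which uses the same tolerance $\varepsilon$ at every step and therefore bounds the entire infimum (over infinitely many edges) by $1-\varepsilon$, regardless of whether $S$ is finite or countably infinite. Once this point is handled, the sup-over-union identity in (2) is purely set-theoretic and requires no additional argument.
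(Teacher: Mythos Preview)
Your argument is correct. Note, however, that the paper does not supply its own proof of this theorem: it is quoted as a known result from \cite{li13}, so there is no in-paper proof to compare against. Your verification is the natural one --- the $\varepsilon$-uniform greedy construction for $Po(Paths(M))=1$ is exactly what is needed given that the normal conditions on $I$ and $P$ assert only suprema, and the identity in (2) is indeed pure sup-calculus.
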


 \begin{remark}\label{re:m-s}
      For paths starting in a certain (possibly noninitial) state $s$, the same construction is applied to the possibilistic Kripke structure
 $M_s$ that results from $M$ by letting $s$ be the unique initial state. Formally, for $M=(S,P,I,AP,L)$ and state $s$,  $M_s$ is defined by $M_s=(S,P,s,AP,L)$ , where $s$ denotes an initial distribution with only one initial state $s$. A PKS is also written as $(S,P,L)$ if $AP$ is given in advance and the initial state is not concerned.
\end{remark}

\begin{definition}\label{def:semantics of PoCTL}\cite{PoCTL2015} (Semantics of PoCTL) Let $a\in AP$ be an atomic proposition, $M=(S,P,L)$ be a possibilistic Kripke structure, state $s\in S$, $\Phi$, $\Psi$ be PoCTL state formulae, and $\varphi$ be a PoCTL path formula. {\sl The satisfaction relation $\models$} is defined {\sl for state formulae} by
\begin{eqnarray*}
M, s\models a  & {\rm iff} \ a\in L(s);\\
M, s\models\neg\Phi & {\rm iff}\ s\not\models\Phi; \\
M, s\models\Phi\wedge\Psi  & {\rm iff} \ s\models\Phi \ {\rm and}\ s\models\Psi;\\
M, s\models Po_{J}(\varphi)  & {\rm iff} \ Po(s\models \varphi)\in J, \ {\rm where}\ Po(s\models \varphi)=Po^{M_s}(\{\pi | \pi\in Paths(s), \pi\models \varphi\}).
\end{eqnarray*}

For path $\pi$, {\sl the satisfaction relation $\models$ for path formulae} is defined by
\begin{eqnarray*}
\pi\models\bigcirc\Phi  & {\rm iff} \ \pi[1]\models\Phi;\\
\pi\models\Phi\sqcup\Psi  & {\rm iff} \  \exists k\geq0,\pi[k]\models\Psi
\ {\rm and}\
 \pi[i]\models\Phi {\rm \ for\ all} \ 0\leq i\leq k-1;\\
\pi \models \Phi\sqcup^{\leq n}\Psi & {\rm iff} \ \exists 0\leq k\leq n, (\pi[k]\models \Psi\wedge(\forall 0\leq i< k),\pi[i]\models \Phi));\\
\pi \models\square\Phi & {\rm \ iff}\ \pi[j]\models \Phi {\rm \ for\ all\ } j\geq 0.
\end{eqnarray*}
where if $\pi=s_0s_1s_2\cdots$, then $\pi[k]=s_k$ for any $k\geq 0$.
\end{definition}

In particular, the path formulae $\lozenge\Phi$ (``eventually'') has the semantics
$$\pi=s_0s_1\cdots\models \lozenge\Phi {\rm \ iff}\ s_j\models \Phi {\rm \ for\ some\ } j\geq 0,$$
Alternatively, $\lozenge\Phi=true\sqcup \Phi$.

We can define release operator $\mathbf{R}$ as follows, for a path $\pi$, $\pi\models \Phi_{1}\mathbf{R} \Phi_{2}$ iff $\forall j (\pi[j]\not\models \Phi_{2}$ implies $\exists k<j (\pi[k]\models \Phi_{1}))$. Then it follows that $\square \Phi=true\  \mathbf{R} \Phi$.

If we write a PoCTL formula $Ne_J(\varphi)$ for a path formula $\varphi$, which have the semantics
\begin{equation}
s\models Ne_J(\varphi) \ {\rm iff}\ Ne^{\mathcal{M}_s}(\{\pi\in Paths(s) : \pi\models \varphi\}\in J
\end{equation}
\noindent for any PKS $M$, then we have the following representation of $Ne_J(\varphi)$, where for interval $J=[u,v], (u,v], [u,v), (u,v)$, $DJ=[1-v,1-u], [1-v,1-u), (1-v, 1-u], (1-v,1-u)$:

\begin{eqnarray}
Ne_J(\bigcirc \Phi)&=& Po_{DJ}(\bigcirc \neg \Phi),\\
Ne_J(\Phi\sqcup \Psi)&=& 
Po_{DJ}(\Phi \mathbf{R}\Psi),\\
Ne_J(\Phi\sqcup^{\leq n} \Psi)&=& Po_{DJ}(\neg\Psi\sqcup^{\leq n}(\neg\Phi\wedge \neg\Psi))\wedge Po_{DJ}(\square^{\leq n} \neg\Psi),\\
Ne_J(\square\Phi)&=& Po_{DJ}(\lozenge\neg\Phi),
\end{eqnarray}

\begin{definition}\upshape(Satisfiability and validity)\label{definition-satisfiability and valid}
Let $\Phi$ be a PoCTL formula. If there exists a PKS $M$ such that $M, s\models \Phi$ for some $s\in S$, then $\Phi$ is said to be satisfiable by $M$ at state $s$, and $M$ is called a model of $\Phi$.
 Let $\Gamma$ be a set of PoCTL formulae. If there exists a PKS $M$ and state $s$ such that $M, s\models \Phi$ for every formula $\Phi$ in $\Gamma$, then $\Gamma$ is said to be satisfiable (at state $s$), write as $M,s\models \Gamma$, and $M$ is called a model of $\Gamma$.

 A PoCTL formula $\Phi$ is valid (denoted $\models\Phi$) if it is satisfied in any PKS, i.e., for any PKS $M$ and any state $s$, $M,s\models \Phi$. A set of formulas $\Gamma$ is valid (denoted $\models\Gamma$) if all formulas in $\Gamma$ are valid.

 Note that $\Phi$ is valid iff $\neg\Phi$ is not satisfiable.

 The PoCTL satisfiability problem is the question of whether a given PoCTL formula has a model.
\end{definition}

\begin{definition}\upshape(Equivalence of PoCTL formulae) Let $\Phi, \Psi$ be two PoCTL formulae. $\Phi$ is said  to be equivalent to $\Psi$, write as $\Phi=\Psi$, provided for any PKS $M$ and state $s$ in $M$, $M,s\models \Phi$ iff $M,s\models \Psi$.
\end{definition}

\begin{definition}\upshape(Logical consequence)\label{definition_logical consequence}
Let $\Phi$ be a PoCTL formula, and let $\Gamma$ be a set of PoCTL formulae. $\Phi$ is a logical consequence of $\Gamma$, denoted $\Gamma\models\Phi$, if for any PKS $M$ and state $s$, $M,s \models\Gamma$ implies $M,s \models\Phi$.
\end{definition}

\subsubsection{The positive normal  form of PoCTL formulae}

For the convenient to give the model of a PoCTL formula, we need some normal forms of PoCTL formulae. For this purpose, we give some propositions about the equivalence of PoCTL formulae, which can be verified directly.

\begin{proposition}\label{pro: the equivalence of PoCTL formulae}
We have the following equivalence of PoCTL formulae.

(1)$Po_{[u,v]}(\varphi)=Po_{\geq u}(\varphi)\wedge Po_{\leq v}(\varphi)$,$Po_{(u,v]}(\varphi)=Po_{> u}(\varphi)\wedge Po_{\leq v}(\varphi)$,$Po_{[u,v)}(\varphi)=Po_{\geq u}(\varphi)\wedge Po_{< v}(\varphi)$,$Po_{(u,v)}(\varphi)=Po_{> u}(\varphi)\wedge Po_{<v}(\varphi)$ for any path formula $\varphi$.

(2)$Po_{<r}(\varphi)=\neg Po_{\geq r}(\varphi)$, $Po_{\leq r}(\varphi)=\neg Po_{>r}(\varphi)$, $Ne_{<r}(\varphi)=\neg Ne_{\geq r}(\varphi)$, $Ne_{\leq r}(\varphi)=\neg Ne_{>r}(\varphi)$ for any path formula $\varphi$.

(3)$\neg Po_{\geq r}(\bigcirc\Phi)=Ne_{>1-r}(\bigcirc \neg \Phi)$, $\neg Po_{\geq r}(\Phi\sqcup \Psi)=Ne_{>1-r}(\neg\Phi \mathbf{R}\neg\Psi)$, $\neg Po_{\geq r}$ $(\Phi\mathbf{R}\Psi)=Ne_{>1-r}(\neg\Phi \sqcup \neg\Psi)$, $\neg Po_{\geq r}(\lozenge\Phi)=Ne_{>1-r}(\square\neg\Phi)$ for any PoCTL formulae $\Phi$ and $\Psi$.

(4) $Po_{>r}(\varphi)\wedge Po_{\geq r}(\varphi)=Po_{>r}(\varphi)$, $Po_{>r_1}(\varphi)\wedge Po_{> r_2}(\varphi)=Po_{>r_1}(\varphi)$ and $Po_{\geq r_1}(\varphi)\wedge Po_{\geq r_2}(\varphi)=Po_{\geq r_1}(\varphi)$ whenever $r_1\geq r_2$, $Po_{>r_1}(\varphi)\wedge Po_{\geq r_2}(\varphi)=Po_{\geq r_2}(\varphi)$ whenever $r_2> r_1$.

(5) $Ne_{>r}(\varphi)\wedge Ne_{\geq r}(\varphi)=Ne_{>r}(\varphi)$, $Ne_{>r_1}(\varphi)\wedge Ne_{> r_2}(\varphi)=Ne_{>r_1}(\varphi)$ and $Ne_{\geq r_1}(\varphi)\wedge Ne_{\geq r_2}(\varphi)=Ne_{\geq r_1}(\varphi)$ whenever $r_1\geq r_2$, $Ne_{>r_1}(\varphi)\wedge Ne_{\geq r_2}(\varphi)=Ne_{\geq r_2}(\varphi)$ whenever $r_2> r_1$.

\end{proposition}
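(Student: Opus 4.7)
The plan is to unfold Definition~\ref{def:semantics of PoCTL} and reduce every equivalence to one of two elementary facts: membership of a real number in subintervals of $[0,1]$, or the measure-level duality $Ne(E)=1-Po(\overline{E})$ from equation~(1). I fix a PKS $M$ and a state $s$, and for each path formula $\varphi$ write $E_\varphi=\{\pi\in Paths(s):\pi\models\varphi\}$ and $p_\varphi=Po^{M_s}(E_\varphi)$, so that by definition $M,s\models Po_J(\varphi)$ iff $p_\varphi\in J$, and similarly for $Ne_J(\varphi)$.

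Parts (1), (2), (4), and (5) are then routine. Part (1) is the observation that $p_\varphi\in[u,v]$ iff $p_\varphi\geq u$ and $p_\varphi\leq v$, with the three analogous statements for the half-open and open intervals, pushed through the semantic clause for $\wedge$. Part (2) is the corresponding remark that $p_\varphi<r$ iff $p_\varphi\not\geq r$, pushed through the clause for $\neg$; the two $Ne$-cases are identical. Parts (4) and (5) are interval-intersection identities such as $(r,1]\cap[r,1]=(r,1]$, $(r_1,1]\cap(r_2,1]=(r_1,1]$ when $r_1\geq r_2$, and $(r_1,1]\cap[r_2,1]=[r_2,1]$ when $r_2>r_1$, applied first to the range of $Po$ and then, verbatim, to the range of $Ne$.

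The substantive step, and the one I expect to be the main obstacle, is part (3). I would compose two dualities. First, the equality $Ne^{M_s}(\overline{E})=1-Po^{M_s}(E)$ yields $Po^{M_s}(E)<r$ iff $Ne^{M_s}(\overline{E})>1-r$, converting the outer $\neg Po_{\geq r}$ into a strict $Ne_{>1-r}$-assertion about the complementary path set. Second, I would identify that complementary path set with $E_\psi$ for a dual path formula $\psi$: directly from the definitions, $\overline{E_{\bigcirc\Phi}}=E_{\bigcirc\neg\Phi}$ and $\overline{E_{\lozenge\Phi}}=E_{\square\neg\Phi}$ follow from negating the single quantifier over path positions. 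The classical until/release duality yields $\overline{E_{\Phi\sqcup\Psi}}=E_{\neg\Phi\,\mathbf{R}\,\neg\Psi}$, obtained by negating the existential quantifier over the witness index $k$ in the definition of $\sqcup$ and matching the resulting condition against the text's definition of $\mathbf{R}$; symmetrically, $\overline{E_{\Phi\,\mathbf{R}\,\Psi}}=E_{\neg\Phi\sqcup\neg\Psi}$. Composing these two dualities clause by clause produces the four identities of (3), completing the proof of the proposition.
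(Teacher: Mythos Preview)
Your proposal is correct and matches the paper's approach: the paper offers no proof beyond the remark that the equivalences ``can be verified directly,'' and your plan is precisely that direct verification---unfolding the semantics of $Po_J$ and $Ne_J$ into interval membership and invoking the duality $Ne(\overline{E})=1-Po(E)$ together with the standard $\sqcup/\mathbf{R}$ and $\lozenge/\square$ complementations at the path level.
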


Using the above proposition, we can give the positive normal  form of PoCTL formulae as below.

\begin{definition} (Syntax of positive normal  form of PoCTL) {\sl PoCTL state formulae} over the set $AP$ of atomic propositions are formed according to the following grammar:
\begin{center}
$\Phi ::= true\mid p \mid\neg \Phi \mid\Phi_{1} \wedge \Phi_{2}\mid\Phi_{1} \vee \Phi_{2}\mid Po_{\sim r}(\varphi)\mid Ne_{\sim r}(\varphi)$
\end{center}
where $p\in AP$, $\sim\in\{\geq, >\}$, $r$ is a rational number in $[0,1]$,  $\varphi$ is a PoCTL path formula.

{\sl PoCTL path formulae} are formed according to the following grammar:

\begin{center}
$\varphi::=\bigcirc \Phi \mid \Phi_{1}\sqcup \Phi_{2}\mid \Phi_{1}\mathbf{R} \Phi_{2}\mid \lozenge \Phi \mid \square\Phi$
\end{center}
where $ \Phi$, $ \Phi_{1}$, and $ \Phi_{2}$ are state formulae.
\end{definition}


Any PoCTL formula can be transformed into its equivalent positive normal form, obtained by pushing negations inward as far as possible using de Morgan's law ($\neg (\Phi\vee \Psi)=\neg \Phi\wedge\neg \Psi, \neg (\Phi\wedge \Psi)=\neg \Phi\vee\neg \Psi$) and the equivalence in Proposition \ref{pro: the equivalence of PoCTL formulae}. This at most doubles the length of the formula, and results in only atomic propositions being negated.

\begin{theorem}\label{th: the positive normal  form of PoCTL}For any PoCTL formula $\Phi$, there is an equivalence positive normal  form $\Psi$ of $\Phi$ such that $|\Psi|\leq 2|\Phi|$.
\end{theorem}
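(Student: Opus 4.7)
The plan is to proceed by structural induction on $\Phi$, defining mutually two normalisation operators $\mathrm{pnf}(\cdot)$ and $\mathrm{pnf}^{-}(\cdot)$, where $\mathrm{pnf}(\Phi)$ returns a PNF equivalent of $\Phi$ and $\mathrm{pnf}^{-}(\Phi)$ returns a PNF equivalent of $\neg\Phi$. The two functions invoke each other whenever a negation has to be propagated through a subformula, which is exactly the mechanism needed to push all negations down to atoms.

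The base and Boolean cases are routine. Set $\mathrm{pnf}(true)=true$, $\mathrm{pnf}(a)=a$, $\mathrm{pnf}^{-}(a)=\neg a$ for $a\in AP$; handle conjunction by $\mathrm{pnf}(\Phi_1\wedge\Phi_2)=\mathrm{pnf}(\Phi_1)\wedge\mathrm{pnf}(\Phi_2)$ and, via De Morgan, $\mathrm{pnf}^{-}(\Phi_1\wedge\Phi_2)=\mathrm{pnf}^{-}(\Phi_1)\vee\mathrm{pnf}^{-}(\Phi_2)$; disjunction is symmetric; eliminate double negation via $\mathrm{pnf}(\neg\Phi)=\mathrm{pnf}^{-}(\Phi)$ and $\mathrm{pnf}^{-}(\neg\Phi)=\mathrm{pnf}(\Phi)$.

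The substantive clauses are those for $\Phi=Po_J(\varphi)$ and $\Phi=Ne_J(\varphi)$. First apply Proposition \ref{pro: the equivalence of PoCTL formulae}(1) to split an arbitrary interval $J$ into a conjunction of at most two simple bounds of the forms $Po_{\geq r}$, $Po_{>r}$, $Po_{\leq r}$, $Po_{<r}$. Next eliminate the upper-bound variants $Po_{\leq r}$ and $Po_{<r}$ using Proposition \ref{pro: the equivalence of PoCTL formulae}(2), obtaining negations of $Po_{>}$ or $Po_{\geq}$ formulae. Each such negated formula is then rewritten by Proposition \ref{pro: the equivalence of PoCTL formulae}(3), together with the $Ne_J$-to-$Po_{DJ}$ identities listed just before the PNF definition, into an $Ne_{\sim r}$ (or $Po_{\sim r}$) formula with $\sim\in\{\geq,>\}$, whose inner path formula has been dualised ($\bigcirc\Phi$ becomes $\bigcirc\neg\Phi$, $\sqcup$ becomes $\mathbf{R}$, $\lozenge$ becomes $\square$, and conversely). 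The state subformulas appearing under the new dual operator are recursively normalised by the appropriate call to $\mathrm{pnf}$ or $\mathrm{pnf}^{-}$, with items (4) and (5) of Proposition \ref{pro: the equivalence of PoCTL formulae} used to collapse any redundant conjuncts that arise.

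The main obstacle is the length bound $|\Psi|\leq 2|\Phi|$, which I would verify by a simultaneous induction establishing $|\mathrm{pnf}(\Phi)|,|\mathrm{pnf}^{-}(\Phi)|\leq 2|\Phi|$ under a uniform symbol-count measure in which every operator (including intervals and path modalities) contributes one symbol. The decisive clause is the interval split $Po_{[u,v]}(\varphi)\mapsto Po_{\geq u}(\varphi)\wedge Po_{\leq v}(\varphi)$, which duplicates $\varphi$ exactly once; this is the step responsible for the factor $2$. Every remaining clause (Boolean De Morgan, double-negation cancellation, the $Po$-to-$Ne$ flip, and path-operator dualisation) is size-preserving up to an additive constant absorbed in the induction, so no further blow-up accumulates and the final size stays within $2|\Phi|$.
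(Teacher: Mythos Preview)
Your construction of the dual translations $\mathrm{pnf}$ and $\mathrm{pnf}^{-}$ is the standard way to make precise what the paper does: the paper offers no formal proof of this theorem beyond the single sentence preceding it (``push negations inward using De~Morgan's law and the equivalences in Proposition~\ref{pro: the equivalence of PoCTL formulae}; this at most doubles the length''). On the equivalence part your argument is therefore more detailed than, and in the same spirit as, the paper's.

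The length bound is where there is a genuine gap in your argument. The simultaneous invariant you propose, $|\mathrm{pnf}(\Phi)|,\,|\mathrm{pnf}^{-}(\Phi)|\le 2|\Phi|$, does not survive the inductive step at a two-sided interval. For $\Phi=Po_{[u,v]}(\bigcirc\Phi_1)$ your rule yields
\[
\mathrm{pnf}(\Phi)=Po_{\ge u}\bigl(\bigcirc\,\mathrm{pnf}(\Phi_1)\bigr)\wedge Ne_{\ge 1-v}\bigl(\bigcirc\,\mathrm{pnf}^{-}(\Phi_1)\bigr),
\]
so $|\mathrm{pnf}(\Phi)|$ is, up to an additive constant, $|\mathrm{pnf}(\Phi_1)|+|\mathrm{pnf}^{-}(\Phi_1)|$, which the hypothesis only bounds by $4|\Phi_1|$, not $2|\Phi_1|$. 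Iterating, a tower $\Phi_k=Po_{[u_k,v_k]}(\bigcirc\Phi_{k-1})$ of $k$ nested two-sided operators has $|\Phi_k|$ linear in $k$ while $|\mathrm{pnf}(\Phi_k)|+|\mathrm{pnf}^{-}(\Phi_k)|$ satisfies the recurrence $s_{k+1}=2s_k+O(1)$ and hence grows like $2^{k}$. Your claim that the split ``duplicates $\varphi$ exactly once'' is true locally, but each of the two copies is then processed recursively and may itself split; the assertion that ``no further blow-up accumulates'' is precisely the step that fails. The paper's one-line justification elides exactly the same point, so this is a gap shared with the paper rather than a divergence from it, but your explicit induction scheme makes the failure visible rather than hiding it.
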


The semantics of positive normal form of PoCTL is the same as the semantics of PoCTL. We give the explicit semantics of $Po_{\sim r}(\varphi)$ and $Ne_{\sim r}(\varphi)$ of a path formula $\varphi$ as follows.
\vspace{-2mm}
\begin{eqnarray}\label{eq:the semantics of evavtulity}
s\models Po_{>r}(\varphi)  & {\rm iff} \ \ \exists \pi\in Paths(s), Po(\pi)> r \ {\rm and}\ \pi\models \varphi;\\
s\models Po_{\geq r}(\varphi)  & {\rm iff} \ \ \exists \pi\in Paths(s), Po(\pi)\geq r \ {\rm and}\ \pi\models \varphi;\\
s\models Ne_{>r}(\varphi)  & {\rm iff} \ \ \forall \pi\in Paths(s), \ {\rm if}\ Po(\pi)\geq 1-r, \ {\rm then}\ \pi\models \varphi;\\
s\models Ne_{\geq r}(\varphi)  & {\rm iff} \ \ \forall \pi\in Paths(s), \ {\rm if}\ Po(\pi)> 1-r, \ {\rm then}\ \pi\models \varphi.
\end{eqnarray}
where for a path $\pi=s_0s_1\cdots \in Paths(s)$, $Po(\pi)={\rm inf}\{P(s_i,s_{i+1}): i=0,1,\cdots\}$.

\section{The satisfiability for PoCTL}

\subsection{Possibilistic Hintikka structures for the formulae of PoCTL}

In the following, we assume that the candidate formula $\Lambda$ is in positive normal form. Let $V_1(\Lambda)$ denote the set of all real (in general, rational) numbers occurred in the formula $\Lambda$, which is finite.

 Let \begin{equation}
V(\Lambda)=V_1(\Lambda)\cup \{1-r: r\in V_1(\Lambda)\}\cup\{0,1\},
\end{equation}
 which is a finite set closed under maximum, minimum and negation operations.

The {\sl closure} of $\Lambda$, $cl(\Lambda)$, is the least set of subformulae such that

(1) each subformulae of $\Lambda$, including $\Lambda$ itself, is a member of $cl(\Lambda)$;

(2) if $Po_{\sim r}(\lozenge \Phi)$, $Po_{\sim r}(\square \Phi)$, $Po_{\sim r}(\Phi \sqcup \Psi)$ or $Po_{\sim r}(\Phi \mathbf{R} \Psi)\in cl(\Lambda)$, then, respectively, $Po_{\sim r}(\bigcirc Po_{\sim r}(\lozenge \Phi))$, $Po_{\sim r}(\bigcirc Po_{\sim r}(\square \Phi))$, $Po_{\sim r}(\bigcirc Po_{\sim r}(\Phi \sqcup \Psi))$ or $Po_{\sim r}(\bigcirc$ $Po_{\sim r}(\Phi \mathbf{R} \Psi))\in cl(\Lambda)$;

(3) if $Ne_{\sim r}(\lozenge \Phi)$, $Ne_{\sim r}(\square \Phi)$, $Ne_{\sim r}(\Phi \sqcup \Psi)$ or $Ne_{\sim r}(\Phi \mathbf{R} \Psi)\in cl(\Lambda)$, then, respectively, $Ne_{\sim r}(\bigcirc Ne_{\sim r}(\lozenge \Phi))$, $Ne_{\sim r}(\bigcirc Ne_{\sim r}(\square \Phi))$, $Ne_{\sim r}(\bigcirc Ne_{\sim r}(\Phi \sqcup \Psi))$ or $Ne_{\sim r}(\bigcirc$ $Ne_{\sim r}(\Phi \mathbf{R} \Psi))\in cl(\Lambda)$.

(4) if $Po_{\sim r}(\bigcirc \Phi)$, or $Ne_{\sim r}(\lozenge \Phi)\in cl(\Lambda)$, then respectively, for any $r'\in V(\Lambda)$ such that $r'\leq r$, $Po_{\sim r'}(\bigcirc \Phi)$, or $Ne_{\sim r'}(\lozenge \Phi)\in cl(\Lambda)$.

(5) if $Po_{>r}(\bigcirc \Phi)$, or $Ne_{>r}(\lozenge \Phi)\in cl(\Lambda)$, then respectively, $Po_{\geq r}(\bigcirc \Phi)$, or $Ne_{\geq r}(\lozenge \Phi)\in cl(\Lambda)$.

(6) if $Po_{\geq r}(\bigcirc \Phi)$, or $Ne_{\geq r}(\lozenge \Phi)\in cl(\Lambda)$, then respectively, for any $r'\in V(\Lambda)$ such that $r'<r$,$Po_{>r'}(\bigcirc \Phi)$, or $Ne_{>r'}(\lozenge \Phi)\in cl(\Lambda)$.


The extended closure of $\Lambda$, $ecl(\Lambda)=cl(\Lambda)\cup \{\neg \Phi: \Phi\in cl(\Lambda)\}$. Here, $\neg \Phi$ is assumed in positive normal form.

Note $|ecl(\Lambda)|\in O(|\Lambda|^2)$. Choose $\Lambda_n=Po_{\frac{1}{2}}(\bigcirc Po_{\frac{1}{3}}(\bigcirc\cdots(\bigcirc Po_{\frac{1}{n}}(\bigcirc a))\cdots))$ ($\bigcirc$ occurs $n-1$ times). Then $V(\Lambda_n)=2n-1$, $|\Lambda_n|=n$, $|ecl(\Lambda_n)|=2n^2-2n+3\in O(n^2)$. It means that $|ecl(\Lambda)|$ can attain $O(|\Lambda|^2)$.

As in CTL, we will distinguish four types of positive normal formulae.

\begin{definition}\label{def:types of PoCTL}

1. {\sl Conjunctive formulae}. Every conjunctive formula, typically denoted by $\alpha$, is associated with several (at most two, in this paper) formulae, called its {\sl conjunctive components}, such that $\alpha$ is equivalent to the conjunction of its conjunctive components.

2.{\sl Disjunctive formulae}. Every disjunctive formula, typically denoted by $\beta$, is associated with several (at most two, in this paper) formulae, called its {\sl disjunctive components}, such that $\beta$ is equivalent to the disjunction of its disjunctive components.

3. {\sl Sucessor formulae}. These are formulae referring to truth in (some, or all) successor states.
There are two types of successor formulae, $Ne_{\sim r}(\bigcirc \Phi)$ and $Po_{\sim r}(\bigcirc \Phi)$. The only component of a successor formulae is its successor component $p$.

4. {\sl Literals:} true, atomic propositions and negations of these. They have no components.

\end{definition}

We summarise the types and components of positive normal formulae in PoCTL in Table \ref{tab:main_table}.


\begin{table}[htbp]
\footnotesize
    \centering
    \caption{Types and components of positive normal formulae in PoCTL}
  \begin{subtable}[t]{\textwidth}
    \centering
  \caption{conjunctive}
  \begin{tabular}{|l|c|}
    \hline
    formula & components  \\
    \hline
    $\Phi\wedge \Psi$ & $\Phi, \Psi$ \\
     $Ne_{\sim r}(\Phi\mathbf{R} \Psi)$& $\Psi$, $\Phi\vee Ne_{\sim r}(\bigcirc Ne_{\sim r}(\Phi\mathbf{R} \Psi))$\\
    $Po_{\sim r}(\Phi\mathbf{R} \Psi)$& $\Psi$, $\Phi\vee Po_{\sim r}(\bigcirc Po_{\sim r}(\Phi\mathbf{R} \Psi))$\\
    $Ne_{\sim r}(\square \Phi)$& $\Psi$, $Ne_{\sim r}(\bigcirc Ne_{\sim r}(\square \Phi))$\\
    $Po_{\sim r}(\square \Phi)$& $\Psi$, $Po_{\sim r}(\bigcirc Po_{\sim r}(\square \Phi))$\\
    \hline
  \end{tabular}
  \label{subtab:conjunctive}
  \end{subtable}
  \hfill
  \begin{subtable}[t]{\textwidth}
     \centering
  \caption{disjunctive}
  \begin{tabular}{|l|c|}
    \hline
    formula & components  \\
    \hline
    $\Phi\vee \Psi$ & $\Phi, \Psi$ \\
    $Ne_{\sim r}(\Phi \sqcup \Psi)$& $\Psi$, $\Phi\wedge Ne_{\sim r}(\bigcirc (Ne_{\sim r}(\Phi \sqcup \Psi))$\\
    $Po_{\sim r}(\Phi\sqcup \Psi)$& $\Psi$, $\Phi\wedge Po_{\sim r}(\bigcirc (Po_{\sim r}(\Phi \sqcup \Psi))$\\
    $Ne_{\sim r}(\lozenge \Phi)$& $\Psi$, $Ne_{\sim r}(\bigcirc Ne_{\sim r}(\lozenge \Phi))$\\
    $Po_{\sim r}(\lozenge \Phi)$& $\Psi$, $Po_{\sim r}(\bigcirc (Po_{\sim r}(\lozenge \Phi))$\\
    \hline
  \end{tabular}
  \label{subtab:disjunctive}
  \end{subtable}
  \hfill
  \begin{subtable}[t]{\textwidth}
     \centering
  \caption{successor}
  \begin{tabular}{|l|c|}
    \hline
    formula & components  \\
    \hline
    $Ne_{\sim r}(\bigcirc \Phi)$& $\Phi$\\
    $Po_{\sim r}(\bigcirc \Phi)$& $\Phi$\\
    \hline
  \end{tabular}
  \label{subtab:successor}
  \end{subtable}

  \label{tab:main_table}
\end{table}

A formula of the form $Ne_{\sim r}(\Phi \sqcup \Psi)$ or $Po_{\sim r}(\Phi \sqcup \Psi)$, and its special case $Ne_{\sim r}(\lozenge \Phi)$ or $Po_{\sim r}(\lozenge \Phi)$, is an {\sl eventuality} formula. An eventuality formula makes a promise that something will happen. This promise must be fulfilled (see Eq.(9-12)). In contrast, $Ne_{\sim r}(\Phi \mathbf{R} \Psi)$ or $Po_{\sim r}(\Phi \mathbf{R} \Psi)$, and their special case $Ne_{\sim r}(\square \Phi)$ and $Po_{\sim r}(\square \Phi)$, are {\sl invariant} formula. An invariant property asserts that whatever happens to occur (if anything) will meet certain conditions.

A {\sl prestructure} $M$ is a triple $(S, P, L)$ just like a PKS, except that the possibilistic transition relation $P$ is not required to be total, i.e., there is a state $s$ such that $P(s,t)=0$ for any state $t$. An {\sl interior} node of a prestructure is one with at least one successor, i.e., those $s$ such that $P(s,t)>0$ or equivalent $P(s,t)\geq r$ for some state $t$ and $r\in (0,1]$. A {\sl frontier} node is one with no successors, i.e., those $s$ such that ${\rm sup}\{P(s,t): t\in S\}=0$.

It is helpful to associate certain consistency requirements on the labelling of a prestructure:

PCR (Propositional Consistency Rules):

(PC0) $\neg \Phi\in L(s)\Rightarrow \Phi\not\in L(s)$;

(PC1) $\alpha=\alpha_1\wedge \alpha_2\in L(s) \Rightarrow \alpha_1\in L(s)$ and $\alpha_2\in L(s)$;

(PC2) $\beta=\beta_1\vee \beta_2\in L(s) \Rightarrow \beta_1\in L(s)$ or $\beta_2\in L(s)$.

LCR (Local Consistency Rules):

(LC0) consists two rules

--(LC0-1) $Ne_{>r}(\bigcirc \Phi)\in L(s)\Rightarrow \exists r'>r, \forall t$, if $P(s,t)>1-r'$, then $\Phi\in L(t)$;

--(LC0-2) $Ne_{\geq r}(\bigcirc \Phi)\in L(s)\Rightarrow \forall t$, if $P(s,t)> 1-r$, then $\Phi\in L(t)$;

(LC1) consists two rules

--(LC1-1) $Po_{>r}(\bigcirc \Phi)\in L(s)\Rightarrow \exists t$, $P(s,t)>r$ and $\Phi\in L(t)$;

--(LC1-2) $Po_{\geq r}(\bigcirc \Phi)\in L(s)\Rightarrow{\rm for\ every\ positive\ integer}\ i,\exists t$, $P(s,t)\geq {\rm max}\{0, r-1/i\}$ and $\Phi\in L(t)$.

\begin{lemma}\label{le:LCR}
If the prestructure $M$ is finite, then the local consistency rules (LC0-1) and (LC1-2) can be replaced by the following equivalent conditions (LC0-1$'$) and (LC1-2$'$), respectively.

(LC0-1$'$) $Ne_{>r}(\bigcirc \Phi)\in L(s)\Rightarrow \forall t$, if $P(s,t)\geq 1-r$, then $\Phi\in L(t)$;

(LC1-2$'$) $Po_{\geq r}(\bigcirc \Phi)\in L(s)\Rightarrow \exists t$, $P(s,t)\geq r$ and $\Phi\in L(t)$.
\end{lemma}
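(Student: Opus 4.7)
The plan is to verify the two claimed equivalences, (LC0-1)$\Leftrightarrow$(LC0-1$'$) and (LC1-2)$\Leftrightarrow$(LC1-2$'$), separately, using finiteness of $S$ to promote suprema to maxima and thereby turn strict inequalities into non-strict ones (and vice versa).

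For (LC0-1)$\Leftrightarrow$(LC0-1$'$): the forward direction is purely logical and needs no finiteness. Given a witness $r'>r$ for (LC0-1), any $t$ with $P(s,t)\ge 1-r$ already satisfies $P(s,t)\ge 1-r>1-r'$, so $\Phi\in L(t)$ by (LC0-1). For the converse, assume (LC0-1$'$) and set $N=\{t\in S : \Phi\notin L(t)\}$; by (LC0-1$'$) every $t\in N$ has $P(s,t)<1-r$. Because $S$ is finite, $m:=\max\{P(s,t):t\in N\}$ (with $m:=0$ if $N=\emptyset$) is well defined and strictly less than $1-r$. Choose $r':=1-m$; then $r'>r$, and any $t$ with $P(s,t)>1-r'=m$ must lie outside $N$, i.e., $\Phi\in L(t)$, yielding (LC0-1).

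For (LC1-2)$\Leftrightarrow$(LC1-2$'$): the implication (LC1-2$'$)$\Rightarrow$(LC1-2) is immediate, since a single $t$ with $P(s,t)\ge r$ and $\Phi\in L(t)$ witnesses (LC1-2) for every $i$ (as $r\ge \max\{0,r-1/i\}$). Conversely, assume (LC1-2) and, for each positive integer $i$, pick $t_i\in S$ with $P(s,t_i)\ge \max\{0,r-1/i\}$ and $\Phi\in L(t_i)$. Since $S$ is finite, the pigeonhole principle gives a state $t^{*}$ that equals $t_i$ for infinitely many $i$; along this subsequence $P(s,t^{*})\ge r-1/i$ for arbitrarily large $i$, so in the limit $P(s,t^{*})\ge r$, and $\Phi\in L(t^{*})$, establishing (LC1-2$'$).

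The main obstacle is each converse direction, where a single ``limit'' witness must be extracted from the approximate witnesses supplied by the weaker form; finiteness of $S$ is essential here because it guarantees that the relevant sup is attained (as a max over a finite set) in case (a) and that a common witness must recur infinitely often in case (b). The rest is bookkeeping with the strict/non-strict thresholds $1-r$ versus $1-r'$ and $r$ versus $r-1/i$, and no further structural assumption on the labelling is needed.
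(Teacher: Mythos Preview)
Your proof is correct and follows essentially the same line as the paper for the equivalence (LC0-1)$\Leftrightarrow$(LC0-1$'$): both directions use finiteness to replace the supremum of $\{P(s,t):\Phi\notin L(t)\}$ by a maximum $m<1-r$ and then set $r'=1-m$. Your handling of (LC1-2$'$)$\Rightarrow$(LC1-2) is also identical to the paper's.

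The only genuine difference is in (LC1-2)$\Rightarrow$(LC1-2$'$). The paper first lists the finitely many nonzero transition values $V(s)=\{r_1<\cdots<r_k\}$, locates $r$ in the interval $(r_j,r_{j+1}]$, and then picks a \emph{single} index $i>1/(r-r_j)$ so that the witness $t$ guaranteed by (LC1-2) for that $i$ already satisfies $P(s,t)>r_j$, hence $P(s,t)\ge r_{j+1}\ge r$. You instead invoke (LC1-2) for \emph{all} $i$, obtain a sequence $(t_i)$, and use the pigeonhole principle on the finite state set to extract a state $t^{*}$ that recurs infinitely often, whence $P(s,t^{*})\ge r$ in the limit. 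Both arguments exploit finiteness of $S$ in an essential way; the paper's discretization avoids the sequential/limit language, while your pigeonhole argument is shorter and does not require enumerating the value set. Either is perfectly adequate here.
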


\begin{proof}

Since $M$ is finite, for a state $s$, the set $V(s)=\{P(s,t): t\in S\}-\{0\}$ is a finite subset of [0,1], let us write this set as $\{r_1, \cdots, r_k=1\}$ with increase order, and write $r_0=0$. Without loss of generality, we can assume $r_i< r\leq r_{j+1}$.

(LC0-1) implies (LC0-1$'$): Assume $Ne_{>r}(\bigcirc \Phi)\in L(s)$, $r'>r$. In this case, for any state $t$, if $P(s,t)\geq 1-r$, then $P(s,t)>1-r'$. By (LC0-1), it follows that $\Phi\in L(t)$, (LC0-1$'$) holds.

(LC0-1$'$) implies (LC0-1): Assume $Ne_{>r}(\bigcirc \Phi)\in L(s)$. (LC0-1$'$) implies that for any state, if $\Phi\not\in L(t)$, then $P(s,t)<1-r$. Since $S$ is finite, we have $V(s)$ is also finite. It follows that $max\{P(s,t): \Phi\not\in L(t)\}<1-r$, which implies that $max\{P(s,t): \Phi\not\in L(t)\}\leq 1-r'<1-r$ for some real number $r'$. Then $r'>r$, and for any state $t$, if $\Phi\not\in L(t)$, then $P(s,t)\leq 1-r'$. The later is equivalent to say, for any state $t$, if $P(s,t)>1-r'$, then $\Phi\in L(t)$. (LC0-1) holds.

(LC1-2) implies (LC1-2$'$): Assume that $Po_{\geq r}(\bigcirc \Phi)\in L(s)$. Since $S$ is finite and $r_i< r\leq r_{j+1}$, we can see that $P(s,t)\geq r$ iff $P(s,t)\geq r_{j+1}$ iff $P(s,t)>r_j$. Since $r>r_j$, take $i>1/(r-r_j)$, then $1/i<r-r_j$, and thus $r-1/i>r-(r-r_j)=r_j$. (LC1-2) implies that $\exists t$, $P(s,t)\geq {\rm max}\{0,r-1/i\}>r_j$ and $\Phi\in L(t)$. This implies that, $\exists t$, $P(s,t)\geq r$ and $\Phi\in L(t)$. (LC1-2$'$) holds.

(LC1-2$'$) implies (LC1-2): Assume that $Po_{\geq r}(\bigcirc \Phi)\in L(s)$. Since $r\geq {\rm max}\{0,r-1/i\}$ for any $i$, by (LC1-2$'$), $\exists t$, $P(s,t)\geq r$ and $\Phi\in L(t)$. In this case, $P(s,t)\geq {\rm max}\{0,r-1/i\}$ for every $i$ and $\Phi\in L(t)$.

\end{proof}

A fragment is a prestructure whose graph is a dag (directed acyclic graph) such that all of its nodes satisfy (PC0-2) and (LC0) above, all its interior nodes satisfy (LC1) above.

A {\sl possibilistic Hintikka structure} for formula $\Lambda$ is a PKS $M=(S, P, L)$ (with $AP=ecl(\Lambda)$ and $\Lambda\in L(s)$ for some $s\in S$) which meets the following conditions:

(1) PCR (PC0-2),

(2) LCR (LC0-1), and

(3) each eventuality is fulfilled.

The following proposition is obvious.

\begin{proposition}\label{pro: possibilistic Hintikka structure}
If a PKS $M=(S, P, L)$ defines a model of $\Lambda$ and each $s$ is labelled with exactly the formulae in $ecl(\Lambda)$ true at $s$, then $M$ is possibilistic Hintikka structure for $\Lambda$. Conversely, a possibilistic Hintikka structure for $\Lambda$ defines a model of $\Lambda$.
\end{proposition}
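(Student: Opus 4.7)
The plan is to prove the two implications by a straightforward structural induction on formulas in $ecl(\Lambda)$, with the main work concentrated on the fixpoint-style path operators.

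For the forward direction, let $M=(S,P,L)$ be a PKS with $M,s\models \Lambda$, and redefine the labelling by $L(s)=\{\Phi\in ecl(\Lambda):M,s\models \Phi\}$. The propositional rules PC0, PC1, PC2 are read off directly from the semantics of $\neg,\wedge,\vee$ combined with the standard fixpoint unfoldings $\Phi\sqcup\Psi\equiv\Psi\vee(\Phi\wedge\bigcirc(\Phi\sqcup\Psi))$ and $\Phi\mathbf{R}\Psi\equiv\Psi\wedge(\Phi\vee\bigcirc(\Phi\mathbf{R}\Psi))$, lifted through the quantitative modalities by the identities in Proposition~\ref{pro: the equivalence of PoCTL formulae}. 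The local rules LC0 and LC1 come from the semantics of $Po_{\sim r}(\bigcirc\Phi)$ and $Ne_{\sim r}(\bigcirc\Phi)$: because $Po(\pi)$ is the infimum of the edge possibilities along $\pi$ and every state admits continuations of possibility arbitrarily close to $1$ by the normality condition of Definition~\ref{de:pkripke}, the supremum $Po(\{\pi:\pi[1]\models\Phi\})$ reduces to $\sup\{P(s,t):t\models\Phi\}$. A strict threshold $>r$ is therefore witnessed by a single successor, whereas a non-strict threshold $\geq r$ may require the approximating family of successors indexed by $i$ in (LC1-2). Finally, condition (3) is immediate: a true $Po_{\sim r}(\Phi\sqcup\Psi)$ is witnessed by an actual path reaching $\Psi$.

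For the converse, assume $M$ is a possibilistic Hintikka structure for $\Lambda$ and prove by induction on the subformula ordering that $\Phi\in L(s)$ iff $M,s\models\Phi$ for every state $s$ and every $\Phi\in ecl(\Lambda)$. Literals, conjunctions, and disjunctions are handled by PC0--PC2 and the inductive hypothesis. The successor cases $Po_{\sim r}(\bigcirc\Phi)$ and $Ne_{\sim r}(\bigcirc\Phi)$ follow from LC0, LC1 and the hypothesis on $\Phi$: for $Po_{\geq r}(\bigcirc\Phi)$ the approximating family in (LC1-2) drives $\sup\{P(s,t):t\models\Phi\}$ up to $r$, while for $Ne_{>r}(\bigcirc\Phi)$ the rule translates into an upper bound $1-r'$ on the possibility of any transition leading to a $\neg\Phi$-state, which in turn bounds the possibility of the failing paths.

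The main obstacle is the temporal modalities beyond $\bigcirc$. For a disjunctive eventuality such as $Po_{\sim r}(\Phi\sqcup\Psi)$, the unfolding in Table~\ref{subtab:disjunctive} together with PC2 either exhibits $\Psi$ at the current state or propagates the same threshold to a successor via LC1; iterating produces a candidate path, and condition (3) is exactly what prevents the promise from being postponed forever, so some $\pi[k]\models\Psi$ really occurs. The infimum of the transition possibilities along that path meets the threshold $\sim r$ because each chosen edge has possibility $\sim r$, or approaches it through a diagonal choice using the $1/i$ families in the $\geq$ case. For a conjunctive invariant such as $Po_{\sim r}(\Phi\mathbf{R}\Psi)$ or $Po_{\sim r}(\square\Phi)$, the analogous unfolding in Table~\ref{subtab:conjunctive} is applied indefinitely, yielding an infinite path along which $\Psi$ holds at every prefix (or $\Phi$ is released), with the same diagonal argument securing the required possibility bound. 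The corresponding necessity cases are handled symmetrically using LC0 in place of LC1.
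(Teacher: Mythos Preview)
The paper does not actually prove this proposition: it is introduced with the sentence ``The following proposition is obvious'' and no argument is given. Your proposal therefore supplies a proof where the paper has none, and the structural-induction strategy you outline is the natural one and is essentially correct.

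Two small points are worth tightening. First, in the converse direction you claim to prove ``$\Phi\in L(s)$ iff $M,s\models\Phi$'' for all $\Phi\in ecl(\Lambda)$. The definition of a possibilistic Hintikka structure does not require the labelling to be maximal (PC0 only says $\neg\Phi\in L(s)\Rightarrow\Phi\notin L(s)$, not the converse), so in general only the forward implication $\Phi\in L(s)\Rightarrow M,s\models\Phi$ is available. Fortunately that is all that is needed: since $\Lambda\in L(s_0)$ for some $s_0$, you conclude $M,s_0\models\Lambda$. Your individual case analyses (for the successor and invariant modalities) already use only this direction, so the fix is just to state the induction claim correctly.

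Second, your treatment of the $Ne$-eventualities as ``symmetric'' to the $Po$-eventualities glosses over a real asymmetry: for $Ne_{\sim r}(\Phi\sqcup\Psi)$ one must show that \emph{every} high-possibility path reaches $\Psi$, not just some path. Here the clause ``each eventuality is fulfilled'' in condition~(3) is doing the heavy lifting. In the paper, ``fulfilled'' is the semantic notion (the DAG witnesses described immediately after the proposition are consequences of fulfilment, not its definition), so for eventualities the implication $\Upsilon\in L(s)\Rightarrow M,s\models\Upsilon$ is essentially assumed. Your inductive unfolding argument via PC2 and LC0/LC1 is thus more than strictly necessary for eventualities; where it is genuinely needed is for the invariant modalities $Po_{\sim r}(\square\Phi)$, $Po_{\sim r}(\Phi\mathbf{R}\Psi)$ and their $Ne$-duals, and there your diagonal construction (fixing an approximation index $i$ and iterating LC1-2 to build a path with $Po(\pi)\geq r-1/i$) is the right idea.
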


If $M$ is a possibilistic Hintikka structure for $\Lambda$, then for each node $s$ of $M$ and each eventuality $\Upsilon$ in
$ecl(\Lambda)$ such that $M,s\models \Upsilon$, there is a fragment or a countable family of fragments (call it $DAG[s,\Upsilon]$, or $\{DAG[s,\Upsilon]_i\}_{i=1}^{+\infty}$) which certifies fulfillment of $\Upsilon$ at $s$ in $M$.  It has $s$ as its root, i.e. the node
from which all other nodes in $DAG[s, \Upsilon]$ are reachable.

If $\Upsilon$ is of the form $Ne_{>r}(\lozenge \Psi)$, then
$DAG[s, \Upsilon]$ is obtained by taking node $s$ and all nodes in the paths $s_0s_1\cdots s_k$ in $M$ such that $\exists r'>r$, $s_0=s$, $P(s_i,s_{i+1})\geq 1-r'$ ($0\leq i\leq k-1$), $\Psi\not\in L(s_i)$ ($0\leq i\leq k-1$)) and $\Psi\in L(s_k)$. The resulting subgraph is indeed a dag all of whose frontier nodes are labelled with $\Psi$.

If $\Upsilon$ is of the form $Ne_{>r}(\Phi\sqcup \Psi)$, then
$DAG[s, \Upsilon]$ would be the same as $DAG[s, Ne_{>r}(\lozenge \Psi)]$, except that its interior nodes are labelled with $\Phi$.

If $\Upsilon$ is of the form $Ne_{\geq r}(\lozenge \Psi)$, then
$DAG[s, \Upsilon]$ is obtained by taking node $s$ and all nodes in the paths $s_0s_1\cdots s_k$ in $M$ such that $s_0=s$, $P(s_i,s_{i+1})>1-r$ ($0\leq i\leq k-1$), $\Psi\not\in L(s_i)$ ($0\leq i\leq k-1$)) and $\Psi\in L(s_k)$.

If $\Upsilon$ is of the form $Ne_{\geq r}(\Phi\sqcup \Psi)$, then
$DAG[s, \Upsilon]$ would be the same as $DAG[s, Ne_{\geq r}(\lozenge \Psi)]$, except that its interior nodes are labelled with $\Phi$.

If $\Upsilon$ is of the form $Po_{\geq r}(\lozenge \Psi)$, we need a countable family of fragments
$\{DAG$ $[s, \Upsilon]_i\}_{i=1}^{+\infty}$. Here $DAG[s, \Upsilon]_i$ is obtained by taking node $s$ and all nodes in a shotest path $s_0s_1\cdots s_k$ in $M$ such that $s_0=s$, $P(s_i,s_{i+1})\geq {\rm max}\{0, r-1/i\}$ ($0\leq i\leq k-1$), $\Psi\not\in L(s_i)$ ($0\leq i\leq k-1$)) and $\Psi\in L(s_k)$.

If $\Upsilon$ is of the form $Po_{\geq r}(\Phi\sqcup \Psi)$, we also need a countable family of fragments
$\{DAG[s, \Upsilon]_i\}_{i=1}^{+\infty}$. Here $DAG[s, \Upsilon]_i$ is the same as $DAG[s, Po_{\geq r}(\lozenge \Psi)]_i$, except that its interior nodes are labelled with $\Phi$.

If $\Upsilon$ is of the form $Po_{>r}(\lozenge \Psi)$, $DAG[s, \Upsilon]$ is took as a shortest path $s_0s_1\cdots s_k$ in $M$ such that $s_0=s$, $P(s_i,s_{i+1})>r$ ($0\leq i\leq k-1$), $\Psi\not\in L(s_i)$ ($0\leq i\leq k-1$)) and $\Psi\in L(s_k)$.

If $\Upsilon$ is of the form $Po_{>r}(\Phi\sqcup \Psi)$, then
$DAG[s, \Upsilon]$ would be the same as $DAG[s, Po_{>r}(\lozenge \Psi)]$, except that its interior nodes are labelled with $\Phi$.

Technically, we say prestructure $M_1=(S_1, P_1,L_1)$ is {\sl contained} in
prestructure $M_2=(S_2, P_2, L_2)$ whenever $S_1\subseteq S_2$, $P_1$ and $L_1$ are the restrictions of $P_2$ and $L_2$ to $S_1$. We say that $M_1$ is {\sl embedded} in $M_2$ provided $M_1$ is
contained in $M_2$, and also every interior node of $M_1$ has the same set of successors
in $M_1$ as in $M_2$.

In a possibilistic Hintikka structure $M$ for $\Lambda$, each fulfilling fragment $DAG[s, \Upsilon]$ for each
eventuality $\Upsilon$, is embedded in $M$. If we collapse $M$ by applying a finite-index
quotient construction, the resulting quotient structure is not, in general, a model
because cycles are introduced into such fragments, and the possibility of the state transition changed. However, there is still a {\sl weak} fragment
contained in the quotient structure of $M$. It is simply no longer
embedded.

{\sl A weak pseudo-possibilistic Hintikka structure} for $\Lambda$ is a structure $M=(S,P,L)$ (with $\Lambda\in L(s)$ for some $s\in S$) which meets the following conditions:

(1) PCL (PC0-2);

(2) (LC0-2), (LC1-1), (LC2-2), and (LC0-1$''$) below:

(LC0-1$''$) $Ne_{>r}(\bigcirc \Phi)\in L(s)\Rightarrow \exists r'\geq r, \forall t$, if $P(s,t)>1-r'$, then $\Phi\in L(t)$;

(3) each eventuality is pseudo-fulfilled in the following sense:

(3.1) $Ne_{>r}(\lozenge \Psi)\in L(s)$ (respectively, $Ne_{>r}(\Phi\sqcup \Psi)\in L(s)$) implies there is a finite fragment-called $DAG[s, Ne_{>r}(\lozenge \Psi)]$ (respectively, $DAG[s, Ne_{>r}(\Phi\sqcup \Psi)]$)- rooted at $s$ contained in $M$ such that for all frontier nodes $t$ of the fragment, $\Psi\in L(t)$ (respectively, and for all interior nodes $u$ of the fragment, $\Phi\in L(u)$), and $\exists r'>r$, $P(s',s'')>1-r'$ for all adjacent nodes $s'$ and $s''$ of the fragment;

(3.2) $Ne_{\geq r}(\lozenge \Psi)\in L(s)$ (respectively, $Ne_{\geq r}(\Phi\sqcup \Psi)\in L(s)$) implies there is a finite fragment-called $DAG[s, Ne_{\geq r}(\lozenge \Psi)]$ (respectively, $DAG[s, Ne_{\geq r}(\Phi\sqcup \Psi)]$)-rooted at $s$ contained in $M$ such that for all frontier nodes $t$ of the fragment, $\Psi\in L(t)$ (respectively, and for all interior nodes $u$ of the fragment, $\Phi\in L(u)$), and $P(s',s'')>1-r$ for all adjacent nodes $s'$ and $s''$ of the fragment;

(3.3) $Po_{>r}(\lozenge \Psi)\in L(s)$ (respectively, $Po_{>r}(\Phi\sqcup \Psi)\in L(s)$) implies there is a finite fragment-called $DAG[s, Po_{>r}(\lozenge \Psi)]$ (respectively, $DAG[s, Po_{>r}(\Phi\sqcup \Psi)]$)-rooted at $s$ contained in $M$ such that for all frontier nodes $t$ of the fragments, $\Psi\in L(t)$ (respectively, and for all interior nodes $u$ of the fragment, $\Phi\in L(u)$), and $P(s',s'')>r$ for all adjacent nodes $s'$ and $s''$ of the fragment;

(3.4) $Po_{\geq r}(\lozenge \Psi)\in L(s)$ (respectively, $Po_{\geq r}(\Phi\sqcup \Psi)\in L(s)$) implies there is a countable family of finite fragments-written $\{DAG[s, Po_{\geq r}(\lozenge \Psi)_i]\}_{i=1}^{+\infty}$ (respectively, $\{DAG[s, Po_{\geq r}(\Phi\sqcup \Psi)_i]\}_{i=1}^{+\infty}$)- rooted at $s$ contained in $M$ such that for some frontier nodes $t$ of the fragment, $\Psi\in L(t)$ (resp. and for all interior nodes $u$ of the fragment, $\Phi\in L(u)$), and $P(s',s'')\geq {\rm max}\{0, r-1/i\}$ for all adjacent nodes $s'$ and $s''$ of the $i$-fragment $DAG[s, Po_{>r}(\lozenge \Psi)_i$ (respectively,  $Po_{>r}(\Phi\sqcup \Psi)_i$).

If (LC0-1) holds on a weak pseudo-possibilistic Hintikka structure for a formula $\Lambda$, then we call it a {\sl pseudo-possibilistic Hintikka structure} for $\Lambda$.

\begin{remark}\label{re: pseudo-Hintikka}
If the weak pseudo-Hintikka $M$ is finite, by Lemma \ref{le:LCR}, then the conditions (3.1) and (3.4) can be replaced by the following (simple) conditions, respectively,

(3.1$'$) $Ne_{>r}(\lozenge \Psi)\in L(s)$ (respectively, $Ne_{>r}(\Phi\sqcup \Psi)\in L(s)$) implies there is a finite fragment-called $DAG[s, Ne_{>r}(\lozenge \Psi)]$ (respectively, $DAG[s, Ne_{>r}(\Phi\sqcup \Psi)]$)- rooted at $s$ contained in $M$ such that for all frontier nodes $t$ of the fragment, $\Psi\in L(t)$ (respectively, and for all interior nodes $u$ of the fragment, $\Phi\in L(u)$), and $P(s',s'')\geq 1-r$ for all adjacent nodes $s'$ and $s''$ of the fragment;

(3.4$'$) $Po_{\geq r}(\lozenge \Psi)\in L(s)$ (respectively, $Po_{\geq r}(\Phi\sqcup \Psi)\in L(s)$) implies there is a finite fragment-written $DAG[s, Po_{\geq r}(\lozenge \Psi)]$ (respectively, $DAG[s, Po_{\geq r}(\Phi\sqcup \Psi)]$)- rooted at $s$ contained in $M$ such that for some frontier nodes $t$ of the fragment, $\Psi\in L(t)$ (resp. and for all interior nodes $u$ of the fragment, $\Phi\in L(u)$), and $P(s',s'')\geq r$ for all adjacent nodes $s'$ and $s''$ of the fragment.

\end{remark}

\begin{lemma}
\label{le:(LC0-1$''$)}(LC0-1$''$) can be inferred by the (LC0-1).
\end{lemma}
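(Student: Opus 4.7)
The plan is to observe that (LC0-1$''$) is a strictly weaker form of (LC0-1), differing only in that the existentially quantified $r'$ is required to satisfy $r' \geq r$ rather than $r' > r$. Since every $r'$ with $r' > r$ also satisfies $r' \geq r$, the implication from (LC0-1) to (LC0-1$''$) should be immediate: the very witness produced by (LC0-1) already witnesses (LC0-1$''$).

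Concretely, I would fix a state $s$ and assume $Ne_{>r}(\bigcirc \Phi)\in L(s)$. Applying (LC0-1), I obtain some $r' > r$ with the property that for every state $t$, whenever $P(s,t) > 1 - r'$ we have $\Phi \in L(t)$. Since $r' > r$ in particular gives $r' \geq r$, the same $r'$ serves as a witness for (LC0-1$''$) at the state $s$ and formula $Ne_{>r}(\bigcirc \Phi)$.

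Because $s$ and $Ne_{>r}(\bigcirc \Phi)\in L(s)$ were arbitrary, this establishes (LC0-1$''$) in its full generality. There is no real obstacle here; the only subtlety worth flagging is to note that, conversely, (LC0-1$''$) does not in general imply (LC0-1), so the containment of conditions is strict — which is the reason both are singled out in the definitions of possibilistic Hintikka structures and weak pseudo-possibilistic Hintikka structures, and also the reason Lemma \ref{le:LCR} is needed to equate them in the finite case.
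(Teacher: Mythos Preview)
Your argument is correct and more direct than the paper's. The paper, after restating (LC0-1), takes a detour: it uses closure rule (5) of $ecl(\Lambda)$ together with (PC1) to deduce that $Ne_{\geq r}(\bigcirc \Phi)\in L(s)$ whenever $Ne_{>r}(\bigcirc \Phi)\in L(s)$, then applies (LC0-2) to obtain the witness $r'=r$, and finally ``combines'' this with the $r'>r$ witness coming from (LC0-1) to conclude the existence of some $r'\geq r$. Your observation that the $r'>r$ produced by (LC0-1) already satisfies $r'\geq r$ renders this machinery superfluous and shows that (LC0-1$''$) follows from (LC0-1) alone, without appeal to (PC1), (LC0-2), or the structure of $ecl(\Lambda)$.

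One small correction to your closing remark: Lemma~\ref{le:LCR} concerns the equivalence of (LC0-1) with (LC0-1$'$), not with (LC0-1$''$). The conditions (LC0-1) and (LC0-1$''$) are \emph{not} equated even in the finite case --- Example~\ref{ex:(LC0-1) does not hold} exhibits a finite quotient structure on which (LC0-1$''$) holds but (LC0-1) fails.
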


\begin{proof}
By (LC0-1), $Ne_{>r}(\bigcirc \Phi)\in L(s)\Rightarrow \exists r'>r, \forall t$, if $P(s,t)>1-r'$, then $\Phi\in L(t)$.

On the other hand, $Ne_{>r}(\bigcirc \Phi)\in L(s)\Rightarrow Ne_{>r}(\bigcirc \Phi)=Ne_{>r}(\bigcirc \Phi)\wedge Ne_{\geq r}(\bigcirc \Phi)\in L(s) \Rightarrow Ne_{\geq r}(\bigcirc \Phi)\in L(s)$. The last implication comes from (PC1) and the definition (5) of $ecl(\Lambda)$.

By (LC0-2), $Ne_{\geq r}(\bigcirc \Phi)\in L(s)\Rightarrow \forall t$, if $P(s,t)>1-r'$, then $\Phi\in L(t)$. Combining with the fact $Ne_{>r}(\bigcirc \Phi)\in L(s)\Rightarrow \exists r'>r, \forall t$, if $P(s,t)>1-r'$, then $\Phi\in L(t)$, it follows that $Ne_{>r}(\bigcirc \Phi)\in L(s)\Rightarrow \exists r'\geq r, \forall t$, if $P(s,t)>1-r'$, then $\Phi\in L(t)$. This shows that (LC0-1$''$) holds.

\end{proof}

In general, (LC0-1) is not equivalent to (LC0-1$''$), i.e., (LC0-1) can not be implied by (LC0-1$''$). In fact, there is a PoCTL formula $\Lambda$ and its model $M$, the quotient structure $M'=M/\equiv_{\Lambda}$ is a weak pseudo-possibilistic Hintikka structure but not a pseudo-possibilistic Hintikka structure as in classical CTL, where
\begin{equation}\label{eq:an equivalence}
s_1\equiv_{\Lambda} s_2 \Leftrightarrow \forall \Phi\in ecl(\Lambda), M,s_1\models\Phi \ {\rm iff}\ M,s_2\models\Phi.
\end{equation}
The key point is that (LC0-1) dose not hold for $M'$ in general, as shown in the following example. The main reason is that the supremum of a subset of real numbers in general not attainable.

\begin{example}\label{ex:(LC0-1) does not hold}
Let $\Lambda=Ne_{>0.5}(\bigcirc a)$, where $a$ is an atomic proposition. Then $ecl(\Lambda)=\{Ne_{>0.5}(\bigcirc a), Po_{\geq 0.5}(\bigcirc \neg a), a, \neg a\}$.

Take a model $M=(S,P,L)$ for $\Lambda$ as follows:

$S=\{s_n: n\geq 1\}\cup \{u_n: n\geq 1\}\cup\{t\}$;
$P(s_n,u_n)=\frac{n}{2(n+1)}$, $P(s_n,t)=1$ and $P(u_n,u_n)=P(t,t)=1$;
$L(u_n)=\{\neg a\}$, $L(t)=\{a\}$ and $L(s_n)=\{Ne_{>0.5}(\bigcirc a)\}$.

Then it can be readily verified that $s_n\models Ne_{>0.5}(\bigcirc a)$ for any $n\geq 1$.

Using $\equiv_{\Lambda}$, we can see that for any $n,m\geq 1$, $u_n\equiv_{\Lambda} u_m$, $s_n\equiv_{\Lambda} s_m$, and $t\equiv_{\Lambda} t$. Let $s=[s_1]$, $u=[u_1]$ and $t=[t]$, then the quotient structure $M'=(S',P',L')$ is,

$S'=\{u,s,t\}$, $P'(s,u)={\rm sup}\{P(s_n,u_n): n\geq 1\}={\rm sup}\{\frac{n}{2(n+1)}: n\geq 1\}=0.5$, and $P'(s,t)=P'(u,u)=P'(t,t)=1$. $L(s)=\{Ne_{>0.5}(\bigcirc a)\}$, $L(u)=\{\neg a\}$ and $L(t)=\{a\}$.

However, $s\not\models Ne_{>0.5}(\bigcirc a)$. This is because, $P'(s,u)=0.5\geq 0.5=1-0.5$, but $a\not\models u$. It means that (LC0-1) does not holds for the quotient structure $M'$.

However, (LC0-1$''$) holds for $M'$, $Ne_{>0.5}(\bigcirc a)\in L(s)$ and if we take $r'=0.5$, it follows that $P'(s,t)>0.5$ and $t\models a$. In fact, (LC0-1$''$) holds for all quotient structure.
\end{example}

Since (LC0-1) does not hold for the quotient structure in general, we can not get the small model theorem for PoCTL using the technique of classical CTL. We will directly construct finite model (pesudo-possibilistic Hintikka structure) for PoCTL formula in the following and get the small model and tree model of PoCTL after that.

We need the following proposition.


\begin{theorem}\label{thm: Small model theorem for PoCTL} Let $\Lambda$ be a PoCTL formula of length $n$. Then the following are equivalent:

(1) $\Lambda$ has a finite pseudo-possibilistic Hintikka structure of size$\leq exp(cn^2)$ for some constant $c$.

(2) $\Lambda$ has a finite model of size$\leq exp(cn^2)$ for some constant $c$.

\end{theorem}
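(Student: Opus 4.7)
My plan is to handle the trivial direction $(2) \Rightarrow (1)$ by direct relabelling, and the substantive direction $(1) \Rightarrow (2)$ by a layered unwinding of the pseudo-possibilistic Hintikka structure. For $(2) \Rightarrow (1)$, given a finite model $M = (S, P, L_0)$ of $\Lambda$ with $|S| \leq \exp(cn^2)$, I would relabel each state $s$ by $L(s) := \{\Phi \in ecl(\Lambda) : M, s \models \Phi\}$. Proposition~\ref{pro: possibilistic Hintikka structure} then turns $(S, P, L)$ into a possibilistic Hintikka structure for $\Lambda$; since actual fulfillment of eventualities implies pseudo-fulfillment and (LC0-1) follows from the semantics, the structure is a fortiori a pseudo-possibilistic Hintikka structure of the same size.

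For the substantive direction $(1) \Rightarrow (2)$, let $M = (S, P, L)$ be a finite pseudo-possibilistic Hintikka structure for $\Lambda$ of size $m \leq \exp(cn^2)$. I would enumerate the eventualities in $ecl(\Lambda)$ as $\Upsilon_1, \ldots, \Upsilon_k$ with $k \leq |ecl(\Lambda)| = O(n^2)$, and for every pair $(s, i)$ with $\Upsilon_i \in L(s)$ fix a witnessing fragment $F_{s,i}$ supplied by conditions (3.1)--(3.4); since $M$ is finite, I can take the non-strict forms (3.1$'$) and (3.4$'$) from Remark~\ref{re: pseudo-Hintikka}. Then I would define $M^* = (S^*, P^*, L^*)$ with $S^* = S \times \{1, \ldots, k\}$ and $L^*((s, i)) = L(s)$. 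At a state $(s, i)$ where $\Upsilon_i \in L(s)$ is still unfulfilled along the current walk through $F_{s,i}$, I would restrict transitions to the edges of $F_{s,i}$ (i.e., $(s, i) \to (t, i)$ with possibility $P(s, t)$ whenever $(s, t)$ lies in $F_{s,i}$); upon hitting a fulfilling frontier node, advance the second coordinate to $i + 1 \pmod{k}$. Otherwise (i.e., $\Upsilon_i \notin L(s)$ or the fulfillment has just occurred), inherit the transitions of $M$ but update the layer to $i + 1 \pmod{k}$. Cycling through the $k$ layers will guarantee that every eventuality $\Upsilon_j \in L(s)$ is reached within $k$ steps and then genuinely fulfilled along the embedded copy of $F_{s,j}$, making $M^*$ a possibilistic Hintikka structure for $\Lambda$, hence a model by Proposition~\ref{pro: possibilistic Hintikka structure}. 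The bound $|S^*| \leq k \cdot m = O(n^2) \cdot \exp(cn^2) \leq \exp(c'n^2)$ is then immediate.

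The hard part will be propagating the \emph{quantitative} constraints through the layered construction. The propositional rules (PC0)--(PC2) and the existential successor rules (LC1-1), (LC1-2) should transfer directly: the fragment supplies a single witnessing edge of the correct possibility, and labels are inherited from $M$. The delicate case is a necessity eventuality $Ne_{\sim r}(\lozenge \Psi)$, whose semantics quantifies universally over all paths whose possibility exceeds the $1 - r$ threshold; forcing transitions at interior nodes of $F_{s,i}$ to lie inside the fragment could in principle create a high-possibility ``bypass'' in neighbouring layers. My key observation will be that the fragment possibility bounds from (3.1$'$) and (3.2) are exactly the thresholds used in (LC0-1) and (LC0-2), so any transition of $M$ not in $F_{s,i}$ whose possibility still exceeds $1 - r$ must lead to a target carrying $\Psi$ or the invariant companion $Ne_{\sim r}(\bigcirc Ne_{\sim r}(\lozenge \Psi))$, thereby extending the fragment rather than escaping it. A case analysis along the conjunctive unfolding recorded in Table~\ref{tab:main_table} should then show that no such bypass evades $\Psi$, so the $Ne$-semantics is preserved in $M^*$.
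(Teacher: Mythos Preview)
Your layered-unwinding strategy for $(1)\Rightarrow(2)$ is the paper's strategy, but your state space $S^*=S\times\{1,\ldots,k\}$ is too coarse, and this is a genuine gap. The phrase ``still unfulfilled along the current walk through $F_{s,i}$'' presupposes that the state $(t,i)$ remembers the root $s$ of the fragment being traversed; it does not. If instead you let the transitions at $(t,i)$ follow $F_{t,i}$ (the only fragment determined by the available data $(t,i)$), nothing rules out a cycle inside layer~$i$: for a $Po_{\geq r}(\lozenge\Psi)$ eventuality one may have $F_{t,i}$ starting $t\to u$ and $F_{u,i}$ starting $u\to t$, giving the loop $(t,i)\to(u,i)\to(t,i)$, so the layer index never advances and $\Upsilon_i$ is never fulfilled. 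Choosing minimum-rank fragments would repair this for $Po$-type eventualities via a rank-decrease argument, but you do not invoke any such choice, and for $Ne$-type eventualities (whose witnessing fragments are branching DAGs rather than single paths) even that requires additional care.

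The paper avoids the problem by first placing a \emph{separate copy} of each $DAGG[s,\Upsilon_i]$ in row~$i$ of a matrix, so that row~$i$ is a disjoint union of DAGs and hence manifestly acyclic; every walk through a row then necessarily reaches a frontier node and passes to row~$i{+}1$. Only \emph{after} this model of size $mN^2$ is built does the paper merge duplicate labels within a row, and the merge rule --- always redirect edges to the copy of greater DAG-depth --- is chosen precisely so that the merged row remains a DAG. Your construction collapses to one node per $(s,i)$ from the outset, and it is exactly this premature identification that loses the acyclicity on which fulfillment of eventualities depends.
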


\begin{proof}

(1)$\Rightarrow$(2): The proof of this step is similar to the related proof in the article \cite{Emerson90}. For the completeness of the proof, we give the sketch of the proof here.
Let $M = (S,P,L)$ be a pseudo-Hintikka model for $\Lambda$. For simplicity we
identify a state $s$ with its label $L(s)$. Then for each state $s$ and each eventuality $\Psi\in s$, there
is a fragment $DAG[s, \Psi]$ contained in $M$ certifying fulfillment of $\Psi$. We show how to
splice together copies of the DAGs, in effect unwinding $M$, to obtain a Hintikka model
for $\Lambda$.

For each state $s$ and each eventuality $\Psi$, we construct a dag rooted at $s$, $DAGG[s, \Psi]$.
If $\Psi\in s$, then $DAGG[s,\Psi] = DAG[s,\Psi]$, otherwise $DAGG[s,\Psi]$ is taken to be the subgraph consisting of $s$ plus a sufficient set of successors to ensure that local consistency
rules (LC0-1) are met.

We now take (a single copy of) each $DAGG[s,\Psi]$ and arrange them in a matrix, the rows range over eventualities $\Psi_1, ..., \Psi_m$ and the columns range
over the states $s_1,..., s_N$ in the tableau. Now each frontier node $s$ in row $i$ is replaced by
the copy of $s$ that is the root of $DAGG[s, \Psi_{i +1}]$ in row $i+1$. Note that each fullpath
through the resulting structure goes through each row infinitely often. As a consequence, the resulting graph defines a model of $\Lambda$, as can be verified by induction on
the structure of formulae. The essential point is that each eventuality $\Psi_t$ is fulfilled along
each fullpath where needed, at least by the time the fullpath has gone through row $i$.

The cyclic model consists of $mN$ DAGGs, each consisting of $N$ nodes. It is thus of
size $mN^2$ nodes, where the number of eventualities $m\leq n$ and the number of tableau
nodes $N\leq 2^{2n^2}$, and $n$ is the length of $\Lambda$. We can chop out duplicate nodes with the same
label within a row. For this purpose, define the depth of a node $t$, $d(t)$, in a dag as the length of the longest path form the root. Given two states $s, s'$ with the same label $L(s)=L(s')$ such that $d(s)>d(s')$, let the deeper state $s$ replace the shallower $s'$ to get a new fragment, i.e., we replace each arc $(u,s')$ by the arc $(u,s)$ and let $P(u,s)=P(u,s')$, and eliminated all nodes no longer reachable from the root. Note that $s'$ itself is no longer reachable from the root $s_0$, it is eliminated. This ensures that after the more shallow node has been chopped out, the resulting graph is still a dag, and moreover, a fragment. Since we can chop out any pair of duplicates, the final fragment has at most a single occurrence of each label. We now get a model of size $mN \leq exp(cn^2)$ for some constant $c$.

(2)$\Rightarrow$(1) is direct.

\end{proof}

\subsection{The satisfiability of PoCTL}

For a PoCTL formula $\Lambda$, write
\begin{equation}
 V(\Lambda)=\{0=t_1<\cdots<t_k=1\}.
\end{equation}

Add the mid-point $\frac{t_i+t_{i+1}}{2}$ between $t_i$ and $t_{i+1}$ into $V(\Lambda)$ for any $1\leq i\leq k-1$, and denote the new set as $EV(\Lambda)$, which has $2k-1$ real numbers, and write this set as,
\begin{equation}
 EV(\Lambda)=\{0=r_1<\cdots<r_{2k-1}=1\}.
\end{equation}
 For any $r_i$ in $EV(\Lambda)$, write $r_i^-=r_{i-1} (i>1)$ and $r_i^+=r_{i+1} (i<2k-1)$ in the following.

\begin{theorem}\label{thm: The satisfiability of PoCTL}

The problem of testing satisfiability for PoCTL is decidable upper bounded by $2^{c_1n^2}$ and lower bounded by $2^{c_2n}$ for some constant $c_1$ and $c_2$, where $n$ is the length of the PoCTL formula.

\end{theorem}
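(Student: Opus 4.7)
The plan is to combine a tableau-style decision procedure for the upper bound with a reduction from classical CTL for the lower bound, with the small model theorem (Theorem \ref{thm: Small model theorem for PoCTL}) serving as the crucial reduction step from semantic satisfiability to combinatorial check.

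For the upper bound, I would build a tableau whose nodes are the maximal propositionally consistent subsets (``atoms'') of $ecl(\Lambda)$ that comply with rules (PC0)--(PC2). Since $|ecl(\Lambda)| = O(n^2)$, there are at most $2^{O(n^2)}$ atoms. Each directed edge is labelled with some weight $r \in EV(\Lambda)$, and an edge from atom $A$ to atom $B$ with weight $r$ is admitted exactly when $r$ together with the $\bigcirc$-content of $B$ is consistent with every successor formula $Po_{\sim r'}(\bigcirc \Phi)$ and $Ne_{\sim r'}(\bigcirc \Phi)$ in $A$, using the finite forms of the local consistency rules from Lemma \ref{le:LCR}. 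I would then iteratively prune (i) atoms whose local constraints (LC0), (LC1) cannot be satisfied in the current out-neighbourhood and (ii) atoms carrying an eventuality $\Upsilon$ for which no finite fragment $DAG[s,\Upsilon]$ satisfying the simplified clauses (3.1$'$)--(3.4$'$) of Remark \ref{re: pseudo-Hintikka} can be drawn inside the current graph. Fulfilment of an eventuality reduces to a reachability computation on the edge-weighted tableau and is evaluated by a fixpoint whose cost is polynomial in the tableau size. Hence the total running time is $2^{O(n^2)}$. If an atom containing $\Lambda$ survives, the surviving subgraph is by construction a pseudo-possibilistic Hintikka structure, so Proposition \ref{pro: possibilistic Hintikka structure} and Theorem \ref{thm: Small model theorem for PoCTL} yield a finite model of size $\exp(c_1 n^2)$; conversely, any model reduced modulo $\equiv_\Lambda$ yields a surviving atom, guaranteeing completeness.

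For the lower bound, I would reduce CTL satisfiability to PoCTL satisfiability by a linear-length translation: $EX\Phi \mapsto Po_{>0}(\bigcirc \Phi^*)$, $AX\Phi \mapsto Ne_{>0}(\bigcirc \Phi^*)$, $E(\Phi \sqcup \Psi) \mapsto Po_{>0}(\Phi^* \sqcup \Psi^*)$, $A(\Phi \sqcup \Psi) \mapsto Ne_{>0}(\Phi^* \sqcup \Psi^*)$, and similarly for $EG/AG$. Any ordinary Kripke structure can be viewed as a PKS with all transitions of weight $1$, in which the quantifier $Po_{>0}(\bigcirc \cdot)$ coincides with the existential CTL quantifier and $Ne_{>0}(\bigcirc \cdot)$ coincides with the universal one; conversely, on any PKS the set of paths of strictly positive possibility carries the branching information required by the translation, and satisfiability is preserved in both directions. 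Since CTL satisfiability is EXPTIME-complete with a matching $2^{\Omega(n)}$ lower bound \cite{EH85}, the same $2^{c_2 n}$ lower bound transfers to PoCTL.

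The main obstacle is the eventuality fulfillment check inside the tableau. The native semantics of $Po_{\geq r}(\lozenge \Psi)$ is defined by a \emph{countable} family of fragments that approximates the supremum from below, so it is not directly amenable to a finite fixpoint computation. Making the procedure effective therefore requires first restricting attention to finite pseudo-Hintikka structures (justified by Theorem \ref{thm: Small model theorem for PoCTL}), and then invoking Lemma \ref{le:LCR} together with the simplified clauses (3.1$'$)--(3.4$'$) of Remark \ref{re: pseudo-Hintikka} to replace the supremum-style constraints by the finitely checkable inequalities $P(s',s'') \geq r$ (or $> 1-r$). Once this collapse is in place the pruning fixpoint works exactly as in the Emerson--Halpern tableau construction for CTL, and the total complexity stays within $2^{O(n^2)}$.
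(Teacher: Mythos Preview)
Your overall strategy---tableau of maximal propositionally consistent subsets of $ecl(\Lambda)$, iterative pruning for local consistency and eventuality fulfilment via a ranking fixpoint, then invoking Theorem~\ref{thm: Small model theorem for PoCTL}---matches the paper's, as does the lower-bound argument via the embedding of CTL.

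However, your completeness justification contains a genuine gap. You write that ``any model reduced modulo $\equiv_\Lambda$ yields a surviving atom,'' but the paper's Example~\ref{ex:(LC0-1) does not hold} is constructed precisely to show that quotienting a PoCTL model by $\equiv_\Lambda$ can destroy (LC0-1): a countable family of transitions with weights $\tfrac{n}{2(n+1)}$ approaching but never reaching $\tfrac12$ collapses to a single transition of weight exactly $\tfrac12$, falsifying $Ne_{>0.5}(\bigcirc a)$ in the quotient. So the quotient is in general only a \emph{weak} pseudo-possibilistic Hintikka structure, and the paper explicitly states that ``we can not get the small model theorem for PoCTL using the technique of classical CTL.'' You therefore cannot read off completeness from the quotient as you propose.

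The paper circumvents this not by quotienting any model but by assigning to each ordered pair of atoms $(s,t)$ a single \emph{syntactically determined} weight $P(s,t)=\max D(s,t)$, where $D(s,t)$ (Eq.~(\ref{D(s,t)})) collects the values $1-r$ (resp.\ $1-r^{-}$) such that every $Ne_{>r}(\bigcirc\Phi)\in s$ (resp.\ $Ne_{\geq r}(\bigcirc\Phi)\in s$) has $\Phi\in t$. The paper proves (its Claim~1) that $D(s,t)$ is downward-closed in $EV(\Lambda)$, which forces the tableau to satisfy (LC0) outright (Claim~2). Because $P(s,t)$ is the \emph{largest} weight compatible with the necessity constraints in $ecl(\Lambda)$, every transition in any concrete model is dominated by the corresponding tableau edge; hence any fulfilment witness in a model lifts to the tableau, and the atom labelling a satisfying state survives all deletions. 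This replacement of ``quotient a given model'' by ``build the syntactically maximal transition function on atoms'' is the possibilistic-specific device that your proposal leaves unspecified.
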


\begin{proof}

We now describe the tableau-based decision procedure for PoCTL. Here, a tableau for a formula is a finite weighted directed graph with nodes labelled by extended subformulae associated with this formula that, in effect, nodes all potential models of this formula. Let $\Lambda$ be the candidate PoCTL formula which is to be tested for satisfiability, by Theorem \ref{thm: Small model theorem for PoCTL}, it is sufficient to construct a peudo-possibilistic Hintikka structure for $\Lambda$ of size $2^{c_1n}$ for some constant $c_1>0$. We proceed as follows.

(1) Built an initial tableau $M=(S,P,L)$ for $\Lambda$, which encodes potential pseudo-possibilistic Hintikka structures for $\Lambda$. Let $S$ be the collection of all maximal, propositionally consistent subsets of $ecl(\Lambda)$, where by maximal we mean that for every formula $\Phi\in ecl(\Lambda)$, either $\Phi$ or $\neg \Phi\in s$, for any $s\in S$, where proposionally consistent refers to rules (PC0-2) above.

For any $s,t\in S$, define
\begin{equation}\label{D(s,t)}
\begin{aligned}
D(s,t)&=\{1-r: \forall Ne_{>r}(\bigcirc \Phi)\in ecl(\Lambda), Ne_{>r}(\bigcirc \Phi)\in s \ {\rm implies}\ \Phi\in t\}\\
&\cup \{1-r^-: \forall Ne_{\geq r}(\bigcirc \Phi)\in ecl(\Lambda), Ne_{\geq r}(\bigcirc \Phi)\in s \ {\rm implies}\ \Phi\in t\}.
\end{aligned}
\end{equation}

We have the following claims about the set $D(s,t)$.

{\bf Claim 1} $D(s,t)$ is a downward subset of $EV(\Lambda)$, i.e., $D(s,t)\subseteq EV(\Lambda)$ and for any $r,r'\in EV(\Lambda)$, if $r\in D(s,t)$ and $r'\leq r$, then $r'\in D(s,t)$.

{\bf Proof} $D(s,t)\subseteq EV(\Lambda)$ is obvious. We show $D(s,t)$ is downward in $EV(\Lambda)$ in four cases.

Case 1: $1-r\in D(s,t)$ and $1-r'\leq 1-r$ for $1-r'\in EV(\Lambda)$. To show $1-r'\in D(s,t)$, $\forall Ne_{>r'}(\bigcirc \Phi)\in ecl(\Lambda)$, assume $Ne_{>r'}(\bigcirc \Phi)\in s$, we need to show that $\Phi\in t$. Since $r'\geq r$ which is implied by the condition $1-r'\leq 1-r$, by the definition (4) of $ecl(\Lambda)$ and $Ne_{>r'}(\bigcirc \Phi)\in ecl(\Lambda)$, it follows that $Ne_{>r}(\bigcirc \Phi)\in ecl(\Lambda)$. Note $Ne_{>r}(\bigcirc \Phi)\wedge Ne_{>r'}(\bigcirc \Phi)=Ne_{>r'}(\bigcirc \Phi)\in s$, then by the rule (PC1) on the state $s$, it follows that $Ne_{>r}(\bigcirc \Phi)\in s$. Since $1-r\in D(s,t)$ and $Ne_{>r}(\bigcirc \Phi)\in s$, by the definition of $D(s,t)$, it follows that $\Phi\in t$. This shows that $Ne_{>r'}(\bigcirc \Phi)\in s$ implies $\Phi\in t$. Therefore, $1-r'\in D(s,t)$.

Case 2: $1-r\in D(s,t)$ and $1-r'^-\leq 1-r$ for $1-r'^-\in EV(\Lambda)$. Then $1-r'^-\leq 1-r$ implies $r'^-\geq r$ and thus $r'>r$. To show $1-r'^-\in D(s,t)$, $\forall Ne_{\geq r'}(\bigcirc \Phi)\in ecl(\Lambda)$, assume $Ne_{\geq r'}(\bigcirc \Phi)\in s$, we need to show that $\Phi\in t$. By the definition (6) of $ecl(\Lambda)$, $r'>r$ and $Ne_{\geq r'}(\bigcirc \Phi)\in ecl(\Lambda)$, it follows that $Ne_{>r}(\bigcirc \Phi)\in ecl(\Lambda)$. Note $Ne_{>r}(\bigcirc \Phi)\wedge Ne_{\geq r'}(\bigcirc \Phi)=Ne_{\geq r'}(\bigcirc \Phi)\in s$, then by the rule (PC1) on the state $s$, it follows that $Ne_{>r}(\bigcirc \Phi)\in s$. Since $1-r\in D(s,t)$ and $Ne_{>r}(\bigcirc \Phi)\in s$, by the definition of $D(s,t)$, we have $\Phi\in t$. This means that $Ne_{\geq r'}(\bigcirc \Phi)\in s$ implies $\Phi\in t$. Therefore, $1-r'^-\in D(s,t)$.

Case 3: $1-r^-\in D(s,t)$ and $1-r'\leq 1-r^-$ for $1-r'\in EV(\Lambda)$. Then $1-r'\leq 1-r^-$ implies $r'\geq r$. To show $1-r'\in D(s,t)$, $\forall Ne_{>r'}(\bigcirc \Phi)\in ecl(\Lambda)$, assume $Ne_{>r'}(\bigcirc \Phi)\in s$, we need to show that $\Phi\in t$. By the definition (5) of $ecl(\Lambda)$, $r'\geq r$ and $Ne_{>r'}(\bigcirc \Phi)\in ecl(\Lambda)$, it follows that $Ne_{\geq r}(\bigcirc \Phi)\in ecl(\Lambda)$. Note $Ne_{\geq r}(\bigcirc \Phi)\wedge Ne_{>r'}(\bigcirc \Phi)=Ne_{>r'}(\bigcirc \Phi)\in s$, then by the rule (PC1) on the state $s$, it follows that $Ne_{\geq r}(\bigcirc \Phi)\in s$. Since $1-r^-\in D(s,t)$ and $Ne_{\geq r}(\bigcirc \Phi)\in s$, by the definition of $D(s,t)$, we have $\Phi\in t$. This shows that $Ne_{> r'}(\bigcirc \Phi)\in s$ implies $\Phi\in t$. Therefore, $1-r'\in D(s,t)$.

Case 4: $1-r^-\in D(s,t)$ and $1-r'^-\leq 1-r^-$ for $1-r'^-\in EV(\Lambda)$. Then $1-r'^-\leq 1-r^-$ implies $r'\geq r$. To show $1-r'^-\in D(s,t)$, it suffices to show that, $\forall Ne_{\geq r'}(\bigcirc \Phi)\in ecl(\Lambda)$, if $Ne_{\geq r'}(\bigcirc \Phi)\in s$, then $\Phi\in t$. Assume $Ne_{\geq r'}(\bigcirc \Phi)\in ecl(\Lambda)$, by the definition (4) of $ecl(\Lambda)$, $r'\geq r$, it follows that $Ne_{\geq r}(\bigcirc \Phi)\in ecl(\Lambda)$. Note $Ne_{\geq r}(\bigcirc \Phi)\wedge Ne_{\geq r'}(\bigcirc \Phi)=Ne_{\geq r'}(\bigcirc \Phi)\in s$, then by the rule (PC1) on the state $s$, it follows that $Ne_{\geq r}(\bigcirc \Phi)\in s$. Since $1-r^-\in D(s,t)$ and $Ne_{\geq r}(\bigcirc \Phi)\in s$, by the definition of $D(s,t)$, we have $\Phi\in t$. This means that $Ne_{\geq r'}(\bigcirc \Phi)\in s$ implies $\Phi\in t$. Therefore, $1-r'^-\in D(s,t)$.

Noe define
\begin{equation}\label{possibility of state transition}
P(s,t)={\rm max} D(s,t).
\end{equation}
and let $L(s)=s$. Then we have Claim 2 below.

{\bf Claim 2} The tableau $M=(S,P,L)$ for $\Lambda$ as constructed above meets all (PC0-2) and (LC0).

{\bf Proof} $M$ obviously satisfies all (PC0-2). Now let us show that $M$ satisfies (LC0).

For (LC0-1), assume that $Ne_{>r}(\bigcirc \Phi)\in s$ and $P(s,t)\geq 1-r$. Since $P(s,t)\in D(s,t)$, $1-r\leq P(s,t)$ and Claim 1, $D(s,t)$ is downward in $EV(\Lambda)$, it follows that $1-r\in D(s,t)$. Then by the definition of $D(s,t)$ and $Ne_{>r}(\bigcirc \Phi)\in s$, it follows that $\Phi\in t$.

For (LC0-2), assume that $Ne_{\geq r}(\bigcirc \Phi)\in s$ and $P(s,t)>1-r$. Since $P(s,t)\in D(s,t)$, $1-r^-\leq P(s,t)$ and Claim 1, $D(s,t)$ is downward in $EV(\Lambda)$, then $1-r^-\in D(s,t)$. By the definition of $D(s,t)$ and $Ne_{\geq r}(\bigcirc \Phi)\in s$, we have $\Phi\in t$.

(2) Test the tableau for consistency and pseudo-fulfillment of eventuality by repeatedly applying the following deletion rules until no more nodes in the tableau can be deleted, and $P$ is restricted to the remaining states.

(2-1) Delete any state which has no successors.

(2-2) Delete any state which violates (LC1).

(2-3) Delete all states $s$ such that eventuality $\Upsilon\in L(s)$ but $\Upsilon$ is not pseudo-fulfilled. To test the tableau for the existence of the appropriate fragments to certify fulfillment
of eventualities, we can use a ranking procedure as follows.

To test for fulfillment of $Po_{\geq r}(\Phi\sqcup \Psi)$, we use a procedure as follow.

(2-3-1) Initially assign rank 1 to all nodes labelled with $\Psi$ and rank $\infty$ to all other nodes.

(2-3-2) For each node $s$ and each formula $\Xi$ such that $Po_{\geq r}(\bigcirc \Xi)$ is in the label of $s$, define

$N_{\Xi}(s)=\{s': P(s,s')\geq r$ and $s'$ is in the tableau with $\Xi\in L(s')\}$

\noindent and compute

$rank(N_{\Xi}(s))={\rm min}\{rank(s'): s' \in N_{\Xi}(s)\}$.

(2-3-3) For each node $s$ of rank is $\infty$ such that $\Phi\in L(s)$, let

$rank(s)=1 + {\rm min}\{rank(N_{\Xi}(s))$:  $Po_{\geq r}(\bigcirc \Xi)\in L(s)\}$.

Repeatedly apply the above ranking rules until stabilization. Then, a node has finite rank iff $Po_{\geq r}(\Phi \sqcup \Psi)$ is fulfilled at it in the tableau.

Testing for fulfillment of $Po_{\geq r} (\lozenge \Psi)$ is a special case of $Po_{\geq r}(\Phi\sqcup \Psi)$, where the formula $\Phi$ is ignored.

The above procedure can be similarly applied to $Po_{>r}(\Phi \sqcup \Psi)$ and $Po_{>r}(\lozenge \Psi)$.

For a $Ne_{>r}(\Phi \sqcup \Psi)$ eventuality, we use a procedure like the above as follow.

(2-3-1$'$)  Initially
assign rank 1 to all nodes labelled with $\Psi$ and rank $\infty$ to all other nodes.

(2-3-2$'$) For each node $s$ and each formula $\Xi$ such that $Po_{\geq 1-r}(\bigcirc \Xi)$ is in the label of $s$, define

$N_{\Xi}(s)=\{s': P(s,s')\geq 1-r$ and $s'$ is in the tableau with $\Xi\in L(s')\}$

\noindent and compute

$rank(N_{\Xi}(s))={\rm min}\{rank(s'): s' \in N_{\Xi}(s)\}$.

(2-3-3$'$) For each node $s$ of rank is $\infty$ such that $\Phi\in L(s)$, let

$rank(s)=1 + {\rm max}\{rank(N_{\Xi}(s))$:  $Po_{\geq 1-r}(\bigcirc \Xi)\in L(s)\}$.

Repeatedly apply the above ranking rules until stabilization. Then, a node has finite rank iff $Ne_{>r}(\Phi \sqcup \Psi)$ is fulfilled at it in the tableau.

Testing for fulfillment of an $Ne_{>r}(\lozenge \Psi)$ is a special case of the above, ignoring the formula
$\Phi$.

The above procedure can be similarly applied to $Ne_{\geq r}(\Phi \sqcup \Psi)$ and $Ne_{\geq r}(\lozenge \Psi)$.

Since there are only a finite number of nodes in the tableau, the above algorithm must terminate.

(3) Let $M'$ be the final tableau. If there exists a state $s'$ in $M'$ with $\Lambda\in L(s')$, then return ``$\Lambda$ is satisfiable"; If not, then return ``$\Lambda$ is unsatisfiable".

Now let us show that the algorithm above is correct and can be implemented to run in time $2^{c_1n^2}$ for some constant $c_1>0$, where $n=|\Lambda|$.

It is direct to see that the tableau as initially constructed meets all (PC0-2) and (LC0). After step (2), we can see the remaining tableau satisfies all (PC0-2) and (LC0-1). Now it is sufficient to show that each eventuality is pesudo-fulfilled after the step (2), i.e., a node $s$ has finite rank iff $Po_{\geq r}(\Phi \sqcup \Psi)$ (similar to $Po_{>r}(\Phi \sqcup \Psi)$) and $Ne_{>r}(\Phi \sqcup \Psi)$ (similar to $Ne_{\geq r}(\Phi \sqcup \Psi)$) is fulfilled at it in the tableau.

If $s\models Po_{\geq r}(\Phi \sqcup \Psi)$, then we can take a shortest path $s_0\cdots s_k$ such that $s_0=s$, $P(s_i,s_{i+1})\geq r$, $\Phi\in L(s_i)$ and $\Psi\in L(s_k)$. By the ranking procedure, we can see that $rank(s)\leq k$. Conversely, assume that $rank(s)=k$ is finite. Let us show that $s\models Po_{\geq r}(\Phi \sqcup \Psi)$. By the definition of $rank(s)$, there is a finite path $s_0\cdots s_k$ such that $s_0=s$, $rank(s_i)=k-(i-1)$, and thus $rank(s_k)=1$, $\Psi\in L(s_k)$ and $\Phi\in L(s_i) (1\leq i\leq k$. This shows that $Po_{\geq r}(\Phi \sqcup \Psi)$ is fulfilled at state $s$.

If $s\models Ne_{>r}(\Phi \sqcup \Psi)$, then for any path $\pi\in Paths(s)$ satisfies $Ne_{>r}(\Phi \sqcup \Psi)$, there exists $m$ such that $\Phi\in L(\pi[i]) (0\leq i\leq m-1)$, $\Psi\in L(\pi[m])$ and $P(\pi[i],\pi[i+1])\geq 1-r$ for any $i\geq 0$. By removing duplicate states of $\pi$, we can obtain a finite path $s_0\cdots s_k$ such that $s_0=s$, $s_k=\pi[m]$, $P(s_i,s_{i+1})\geq 1-r$, $\Phi\in L(S_i)(0\leq i\leq k-1)$, $\Psi\in L(s_k)$ and $k\leq n$. This shows that $rank(s)\leq n$ and thus finite. Conversely, assume $rank(s)=k$ is finite. If $\Psi\in L(s)$, in this case, $s\models Ne_{>r}(\Phi \sqcup \Psi)$. On the other hand,  if $\Psi\not\in L(s)$, then $\Phi\in L(s)$ and for all $Po_{\geq 1-r}(\bigcirc \Xi)\in L(s)$, there exists $s'$ such that $P(s,s')\geq 1-r$ and $\Xi\in L(s')$. Since $rank(s)$ is finite, by its definition, it follows that, for all $s'$, if $P(s,s')\geq 1-r$, then $\Phi\in L(s')$ or $\Psi\in L(s')$, and $rank(s')<rank(s)$. Continue this procedure, we can find a state $s_k$, $rank(s_k)=1$, and thus $\Psi\in L(s_k)$. This shows that $s\models Ne_{>r}(\Phi \sqcup \Psi)$.

For the time complexity of the above algorithm, since $|ecl(\Lambda)|\leq 2|\Lambda|^2=2n^2$, $S$ has $2^{2n^2}$ members. Step (1) can clearly be done in time quadratic in the size of $S$. For the step 2, (2-1) and (2-2) can be done in the size $S$, (2-3) will be repeated at most $|ecl(\Lambda)|$ (in fact, the number of the  eventualities occurred in the $ecl(\Lambda))$. For each eventuality, (2-3-1)-(2-3-3) can be done in time polynomial in the number of nodes remaining in the tableau. Hence, the complexity of the algorithm is upper bounded by $2^{c_1n^2}$ for some constant $c_1>0$. This shows that the above algorithm can run in deterministic exponential time in the square of length of the input formula, since the size of the tableau is, in general, exponential in the square of formula size. The lower bound follows by the fact that CTL is a proper subset of PoCTL and the problem of testing satisfiability for CTL is complete for deterministic exponential time.

\end{proof}

The small model property for the logic means that if a formula is satisfiable then it is satisfiable in a small finite model, where ``small'' means of size bounded by some functions, say, $f$, of the length of the input formula. By the above theorem, it follows that PoCTL has the small model property.

\begin{corollary}\label{co: Small model theorem for PoCTL}(Small model property for PoCTL)\
PoCTL has the small model property.
\end{corollary}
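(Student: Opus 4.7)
The plan is to derive the corollary as an immediate consequence of Theorem~\ref{thm: Small model theorem for PoCTL} and Theorem~\ref{thm: The satisfiability of PoCTL}, by tracing through what those results yield whenever a PoCTL formula is satisfiable. Concretely, I would begin by fixing an arbitrary satisfiable PoCTL formula $\Lambda$ of length $n$ and let $M$ be any (possibly very large, possibly infinite) PKS such that $M,s\models \Lambda$ for some state $s$. The goal is to exhibit a finite model of $\Lambda$ whose size is bounded by an explicit function of $n$.

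First I would invoke the tableau construction from the proof of Theorem~\ref{thm: The satisfiability of PoCTL}. The satisfiability decision procedure builds an initial tableau on $S\subseteq 2^{ecl(\Lambda)}$ of at most $2^{|ecl(\Lambda)|}\leq 2^{2n^2}$ nodes, assigns transition possibilities via $P(s,t)=\max D(s,t)$, and then prunes via the deletion rules (2-1)--(2-3). The correctness part of that proof shows that $\Lambda$ is satisfiable if and only if some node of the final pruned tableau contains $\Lambda$ in its label, and that this pruned tableau is a finite pseudo-possibilistic Hintikka structure for $\Lambda$. Since $\Lambda$ is assumed satisfiable, such a surviving node exists, so we obtain a finite pseudo-possibilistic Hintikka structure for $\Lambda$ of size at most $2^{c_1 n^2}$.

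Next I would feed this pseudo-possibilistic Hintikka structure into the implication $(1)\Rightarrow(2)$ of Theorem~\ref{thm: Small model theorem for PoCTL}. That construction splices together copies of the fulfilling DAGs $DAG[s,\Psi]$ arranged in an eventuality-by-state matrix and collapses duplicate labels row-wise, producing a genuine PKS model of $\Lambda$ whose size is bounded by $mN\leq \exp(cn^2)$, where $m\leq n$ is the number of eventualities in $ecl(\Lambda)$ and $N\leq 2^{2n^2}$ is the number of tableau nodes. Putting together the two bounds, any satisfiable $\Lambda$ admits a finite PKS model of size at most $\exp(c'n^2)$ for some absolute constant $c'$, which is exactly the small model property (with bounding function $f(n)=\exp(c'n^2)$).

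The chain of reasoning is essentially bookkeeping since both Theorems~\ref{thm: Small model theorem for PoCTL} and~\ref{thm: The satisfiability of PoCTL} have already done the technical work; the only subtle point to double-check is that the pseudo-Hintikka structure produced by the tableau pruning indeed satisfies the hypothesis of clause~(1) of Theorem~\ref{thm: Small model theorem for PoCTL}, including the pseudo-fulfillment of each eventuality (guaranteed by the ranking procedure in step (2-3)) and the local-consistency rules (guaranteed by Claim~2 plus the step (2-2) deletion). Once that is noted, the corollary is immediate, and in fact the proof shows the slightly stronger tree-like bound $f(n)=2^{c'n^2}$ advertised implicitly in the paper.
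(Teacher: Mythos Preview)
Your proposal is correct and follows essentially the same route as the paper: the corollary is stated there without a separate proof, simply as an immediate consequence of Theorem~\ref{thm: The satisfiability of PoCTL} (which itself invokes Theorem~\ref{thm: Small model theorem for PoCTL}), and you have merely unpacked that chain of implications explicitly. Your identification of the two ingredients---the tableau producing a finite pseudo-possibilistic Hintikka structure of size $\leq 2^{2n^2}$, and the DAG-splicing of Theorem~\ref{thm: Small model theorem for PoCTL} turning it into a genuine model of size $\leq \exp(cn^2)$---is exactly the intended argument.
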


\begin{corollary}\label{co: Tree model property for PoCTL}(Tree model property for PoCTL)\
If $\Lambda$ is a satisfiable PoCTL formula of length
$n$, then $\Lambda$ has an infinite tree model with finite branching bounded by $O(n^2)$.
\end{corollary}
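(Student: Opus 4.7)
The plan is to start from a finite pseudo-possibilistic Hintikka structure given by Corollary \ref{co: Small model theorem for PoCTL}, prune each state's outgoing transitions down to $O(n^2)$ essential ones, and then unwind around a satisfying state into an infinite tree.

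First, invoke Corollary \ref{co: Small model theorem for PoCTL} to obtain a finite pseudo-possibilistic Hintikka structure $M=(S,P,L)$ with $\Lambda\in L(s_0)$ for some $s_0\in S$. Next, for each $s\in S$ I would define a set $Succ(s)\subseteq S$ of essential successors by collecting: (a) one witness $t$ guaranteed by (LC1-1) or by (LC1-2$'$) of Lemma \ref{le:LCR} for every formula $Po_{\sim r}(\bigcirc \Phi)\in L(s)$; (b) for every eventuality $\Upsilon\in L(s)$, the next-step node on a fixed fulfilling fragment $DAG[s,\Upsilon]$ supplied by the pseudo-Hintikka structure; and (c) if (a) and (b) are both empty, one arbitrary $t$ with $P(s,t)>0$, to preserve totality. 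Since $|ecl(\Lambda)|\in O(n^2)$, this gives $|Succ(s)|\in O(n^2)$. Then set $M'=(S,P',L)$ with $P'(s,t)=P(s,t)$ when $t\in Succ(s)$ and $P'(s,t)=0$ otherwise, and verify that $M'$ is still a pseudo-possibilistic Hintikka structure: (LC0) is preserved because it is a universal condition over successors, which only becomes easier after removing some; (LC1) is preserved by the explicit witnesses chosen in (a); and each eventuality is still pseudo-fulfilled by the fragment fixed in (b).

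Finally, unwind $M'$ around $s_0$ into a tree $T$ whose nodes are the finite $P'$-paths starting at $s_0$, with children of $\pi=s_0\cdots s_k$ being the extensions $\pi t$ for $t\in Succ(s_k)$, and with transition possibilities and labels inherited from $M'$. Then $T$ is a tree PKS whose branching is bounded by $\max_{s}|Succ(s)|\in O(n^2)$, and a structural induction over $ecl(\Lambda)$, using the possibility-preserving bijection between paths of $T$ rooted at $s_0$ and paths of $M'$ starting at $s_0$, yields $T,s_0\models\Lambda$. The hard part will be step (b): one must ensure that a single coherent choice of next-step successors at every state simultaneously certifies every eventuality rooted there. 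This follows from suffix-closure of satisfaction, which forces each interior node of a fragment for $\Upsilon$ to itself carry $\Upsilon$ in its label and thus to allocate its own witness through its copy of step (b), so the fragments survive pruning coherently.
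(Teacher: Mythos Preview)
Your overall architecture (prune the finite pseudo-Hintikka structure to out-degree $O(n^2)$, then unwind) is a legitimate alternative to the paper's route, which unwinds first into an infinite tree and then prunes level by level. Either order can be made to work. The bound $O(n^2)$ you quote comes from the same source as in the paper, namely $|ecl(\Lambda)|\in O(n^2)$.

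However, your last paragraph contains a genuine gap. Suffix-closure alone does \emph{not} guarantee that the locally chosen ``next-step node on a fixed fulfilling fragment $DAG[s,\Upsilon]$'' assemble into a fulfilling fragment in the pruned structure $M'$. Concretely: take states $s,t,u$ with $P(s,t)=P(t,s)=P(s,u)=P(t,u)=1$, $\Psi\in L(u)$ only, and $\Upsilon=Po_{\geq 1}(\lozenge\Psi)\in L(s)\cap L(t)$. If the ``fixed'' fragments happen to be $DAG[s,\Upsilon]=s\to t\to u$ and $DAG[t,\Upsilon]=t\to s\to u$, then step~(b) puts $t\in Succ(s)$ and $s\in Succ(t)$, and $u$ becomes unreachable in $M'$; the eventuality is no longer pseudo-fulfilled. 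Your suffix-closure observation correctly yields $\Upsilon\in L(t)$ and $\Upsilon\in L(s)$, but that is precisely what allows the two fragments to point at each other.

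The missing ingredient is a well-foundedness device. The paper supplies it by choosing, for every $Po_{\sim r}(\bigcirc\Psi)\in L(s')$, a successor of \emph{least $\Psi$-rank}, where the $\Psi$-rank is the length of a shortest fulfilling path when $\Psi$ is an eventuality. This forces the rank to strictly decrease along the retained successors, so the retained edges really do compose into a fulfilling path. In your framework the easiest repair is to stipulate that each $DAG[s,\Upsilon]$ is a \emph{shortest} fulfilling fragment in $M$; then the next-step node has strictly smaller rank than $s$, and an induction on rank shows that $M'$ is again pseudo-Hintikka. Once that is in place, your prune-then-unwind argument goes through and is essentially equivalent to the paper's unwind-then-prune proof.
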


\begin{proof}

Suppose $M,s\models \Lambda$, and $M$ is finite. First, note $M=(S,P,L)$ can be unwound
into an infinite weighted tree model $M_1=(S_1,P_1,L_1)$ with root state $s_1$ a copy of $s$, where $S_1$ and $P_1$ are respectively, the least subset of $S\times \mathbb{N}$, $S_1\times S_1\rightarrow [0,1]$ such that (a) $(s,0)\in S_1$, (b) if $(s,n)\in S_1$, then $\{(t, n+1): P(s,t)>0\}\subseteq S_1$, and $P((s,n), (t,n+1))=P(s,t)$, $L(s,n)=L(s)$. Then $M_1$ is a tree with root $s_1=(s,0)$, it can be readily verified that $M,s\models \Phi$ iff $M_1,s_1\models \Phi$ for any $\Phi\in ecl(\Lambda)$.

It is possible that $M_1$ has
many branching at some states, so (if needed) we chop out spurious successor states to
get a bounded branching subtree $M_2$ of $M_1$ such that still $M_2, s_1\models\Lambda$. We proceed
down $M_1$ level-by-level deleting all but $n$ successors of each state. The key idea is that
for any $Po_{\sim r}(\bigcirc \Phi)\in L(s')$,
where $s'$ is a retained node on the current level, we keep a successor $t$ of $s'$ of least $\Psi$-rank such that $P(s',t)\sim r$, where the $\Psi$-rank(s$'$) is defined as the length of the
shortest path from $s'$ fulfilling $\Psi$ if $\Psi$ is of the form $Po_{\sim r}(\lozenge \Phi)$ or $Po_{\sim r}(\Phi_1\sqcup \Phi_2)$, and is defined as $0$ if $\Psi$ is
of any other form. For any $Po_{\sim r}(\bigcirc \Phi)\in L(s')$, we get a successor $t$ of $s'$ of least $\Psi$-rank such that $P(s',t)\sim r$.
This will ensure that each eventuality of the form $Po_{\sim r}(\lozenge \Phi)$ or $Po_{\sim r}(\Phi_1\sqcup \Phi_2)$ is
fulfilled in the tree model $M_2$. Moreover, since there are at most $O(n^2)$ formulae of the
form $Po_{\sim r}(\bigcirc \Psi)$ in $ecl(\Lambda)$, the branching at each state of the subtree is bounded by $O(n^2)$.
\end{proof}

\section{The axiomatization for PoCTL}

In this section, we introduce the axiomatization of PoCTL. The axiomatic system proposed in this article is denoted by $\text{AxSys}_{\text{PoCTL}}$. First, we introduce the axiomatic system $\text{AxSys}_{\text{PoCTL}}$, including its axiom schemes and inference rules. Subsequently, we discuss several properties including the deduction theorem of $\text{AxSys}_{\text{PoCTL}}$. Finally, we prove the soundness and complete theorem of the axiomatic system $\text{AxSys}_{\text{PoCTL}}$.

\subsection{The axiomatic system $\text{AxSys}_{\text{PoCTL}}$}

In this section, we give an axiomatization of PoCTL. The axiom schemes of the axiomatic system $\text{AxSys}_{\text{PoCTL}}$ are as follows, where $\sim,\overline{\sim}\in \{\geq, >\}$ and  $\overline{\sim}\not= \sim$:

  \begin{flalign*}
  \text{A1. }& \text{All instances of classical propositional theorems. }&&\\
  \text{A2. }&Po_{\sim r}(\bigcirc (\Phi\vee \Psi))\leftrightarrow (Po_{\sim r}(\bigcirc \Phi)\vee Po_{\sim r}(\bigcirc \Psi))&&\\
  \text{A3a. }&Po_{\sim r}(\lozenge \Phi)\leftrightarrow Po_{\sim r}(true \sqcup \Phi)&&\\
  \text{A3b. }&Ne_{\sim r}(\lozenge \Phi)\leftrightarrow Ne_{\sim r}(true \sqcup \Phi)&&\\
  \text{A4a. }& Po_{\sim r}(\square \Phi)\leftrightarrow \neg Ne_{\overline{\sim} 1-r}(\lozenge \neg \Phi)&&\\
  \text{A4b. }& Ne_{\sim r}(\square \Phi)\leftrightarrow \neg Po_{\overline{\sim} 1-r}(\lozenge \neg \Phi) &&\\
  \text{A5. }&Ne_{\sim r}(\bigcirc \Phi)\leftrightarrow \neg(Po_{\overline{\sim} 1-r}(\bigcirc \neg \Phi)&&\\
  \text{A6. }&\Psi\vee (\Phi\wedge (Po_{\sim r}(\bigcirc Po_{\sim r}(\Phi\sqcup \Psi))\rightarrow Po_{\sim r}(\Phi\sqcup \Psi)&&\\
  \text{A7. }&\Psi\vee (\Phi\wedge (Ne_{\sim r}(\bigcirc Ne_{\sim r}(\Phi\sqcup \Psi))\rightarrow Ne_{\sim r}(\Phi\sqcup \Psi)&&\\
  \text{A8. }&Po_{\sim r}(\bigcirc true), r\in [0,1)&&\\
  \text{A9. }&Ne_{\sim r}(\square((\Psi\vee (\Phi\wedge Ne_{\sim r}(\bigcirc \Xi)))\rightarrow \Xi)\rightarrow ( Ne_{\sim r}(\Phi\sqcup \Psi)\rightarrow\Xi)&&\\
  \text{A10. }&Ne_{\sim r}(\square((\Psi\vee (\Phi\wedge Po_{\overline{\sim} 1-r}(\bigcirc \Xi))\rightarrow \Xi)\rightarrow(Po_{\overline{\sim} 1-r}(\Phi\sqcup \Psi)\rightarrow \Xi)&&\\
  \text{A11. }&Ne_{\sim r}(\square(\Phi\rightarrow \Psi))\rightarrow(Po_{\overline{\sim} 1-r}(\bigcirc \Phi)\rightarrow Po_{\overline{\sim} 1-r}(\bigcirc \Psi))&&\\
  \end{flalign*}

  Axioms for possibilistic measure for any path formula $\varphi$.

  \begin{flalign*}
  \text{AP1. }&Po_{>r}(\varphi)\rightarrow Po_{\geq r}(\varphi)&&\\
    \text{AP2. }&Po_{\geq r_1}(\varphi)\rightarrow Po_{>r_2}(\varphi),\ r_{1}>r_{2}&&\\
    \text{AP3. }&Po_{\geq r_{1}}(\varphi)\rightarrow Po_{\geq r_{2}}(\varphi),\ r_{1}\geq r_{2}&&\\
  \end{flalign*}

The inference rules are as follows:
\begin{flalign*}
\begin{split}
   &\text{(R1)(MP rule) From }  \Phi\text{ and }\Phi\rightarrow \Psi\text{ infer } \Psi. \\
   & \text{(R2) (Necessitation rules)}\\
    &\text{(1) From }  \Phi \text{ infer } Ne_{\sim r}(\bigcirc \Phi);\\
    &\text{(2) from }\Phi \text{ infer } Ne_{\sim r}(\square \Phi).\\
\end{split}
\end{flalign*}

\begin{remark}\upshape\label{remark-LTL formula}
 (A2) reflects the commutative of the next operator $\bigcirc$ and the disjunctive operator $\vee$. (A3a) and (A3b) show that the eventuality operator $\lozenge$ is a special case of until operator $\sqcup$. (A4a), (A4b) and (A5) describe the relationship between $Ne$ and $Po$, i.e., $Ne$ is the dual of $Po$.  (A6) and (A7) give the explanation of the pre-fixpoint properties of the until operator $\sqcup$. (A8) reflects that the PKS models satisfy the condition ${\rm sup}_{s'\in S}P(s,s')=1$. (A9)-(A10) say that $Po_{\sim r}(\Phi\sqcup \Psi)$ and $Ne_{\sim r}(\Phi\sqcup \Psi)$ are the least pre-fixpoint of some operators.  (A11) says the relationship between the always operator $\square$ and the next operator $\bigcirc$. (AP1)-(AP3) concern the properties of possibility measure, which reflect the relationship between $``\geq''$ and $``>''$. Inference rules are the usual MP rule and necessitation rules.
\end{remark}

\begin{definition}\upshape(Proof)\label{definition of proof}
A formula $\Xi$ is a theorem of PoCTL ($\vdash\Xi$), if there is a finite sequence of formulas $\Xi_{1},\Xi_{2},\cdots,\Xi$, such that each $\Xi_{i}$ is an axiom, or is derived from the preceding formulas by an inference rule.

Let $\Xi$ be a PoCTL formula. We say that a formula $\Xi$ is deducible from a set of formulas $\Gamma$ (denoted $\Gamma\vdash \Xi$) if there is a finite sequence of formulas $\Xi_{1},\Xi_{2},\cdots,\Xi$, such that each $\Xi_{i}$ is either an axiom, a theorem, a formula from the set $\Gamma$, or is derived from the preceding formulas by one of the inference rules, where the necessitation rules can be applied in deduction in $\text{AxSys}_{\text{PoCTL}}$ to any theorem of $\text{AxSys}_{\text{PoCTL}}$, but not to the other assumptions. The sequence $\Xi_{1},\Xi_{2},\cdots,\Xi$ is the proof of $\Xi$ from $\Gamma$.

An axiomatic system is said to be {\sl sound} if every theorem is valid. It is said to be {\sl complete} if every valid formula is a theorem.
\end{definition}

The deduction theorem holds obviously in $\text{AxSys}_{\text{PoCTL}}$ (c.f.\cite{Zach2018})

\begin{theorem}(Deduction theorem) Let $\Phi$, $\Psi$ be two PoCTL formulae, and let $\Gamma$ be a set of PoCTL formulas. Then, $\Gamma\cup\{\Phi\}\vdash\Psi$ if and only if $\Gamma\vdash\Phi\rightarrow\Psi$.
\end{theorem}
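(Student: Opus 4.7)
The plan is to prove the two directions separately, with the nontrivial direction proved by induction on the length of the deduction.

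The easy direction ($\Leftarrow$) is straightforward: given a proof of $\Phi\rightarrow\Psi$ from $\Gamma$, I simply append the assumption $\Phi$ (now available in $\Gamma\cup\{\Phi\}$) and one application of modus ponens (R1) to obtain $\Psi$.

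For the harder direction ($\Rightarrow$), I would proceed by induction on the length $n$ of the proof sequence $\Xi_1,\dots,\Xi_n=\Psi$ witnessing $\Gamma\cup\{\Phi\}\vdash\Psi$. There are five cases to distinguish based on how $\Xi_i$ entered the sequence. (i) If $\Xi_i$ is an axiom or a previously established theorem, then the classical tautology $\Xi_i\rightarrow(\Phi\rightarrow\Xi_i)$ (an instance of A1) together with R1 yields $\Gamma\vdash\Phi\rightarrow\Xi_i$. (ii) If $\Xi_i\in\Gamma$, the same argument works since $\Xi_i$ is immediately deducible from $\Gamma$. (iii) If $\Xi_i=\Phi$, then $\Phi\rightarrow\Phi$ is a propositional tautology via A1. (iv) If $\Xi_i$ was obtained by R1 from earlier $\Xi_j$ and $\Xi_j\rightarrow\Xi_i$, then by the induction hypothesis $\Gamma\vdash\Phi\rightarrow\Xi_j$ and $\Gamma\vdash\Phi\rightarrow(\Xi_j\rightarrow\Xi_i)$, and the propositional theorem $(\Phi\rightarrow(\Xi_j\rightarrow\Xi_i))\rightarrow((\Phi\rightarrow\Xi_j)\rightarrow(\Phi\rightarrow\Xi_i))$ combined with two applications of R1 delivers $\Gamma\vdash\Phi\rightarrow\Xi_i$.

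The case that requires care is (v), where $\Xi_i$ arises from an application of a necessitation rule R2 to some earlier $\Xi_j$, so $\Xi_i$ has the form $Ne_{\sim r}(\bigcirc \Xi_j)$ or $Ne_{\sim r}(\square \Xi_j)$. Here the restriction built into the definition of proof is crucial: R2 may be applied in a deduction only to theorems of $\text{AxSys}_{\text{PoCTL}}$, not to arbitrary hypotheses from $\Gamma\cup\{\Phi\}$. Therefore $\vdash\Xi_j$ already holds, and a fresh application of R2 gives $\vdash\Xi_i$, so $\Xi_i$ is itself a theorem. Case (i) then applies to $\Xi_i$, yielding $\Gamma\vdash\Phi\rightarrow\Xi_i$. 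This restriction is exactly what prevents the usual temporal-logic counterexamples (for instance, deducing $Ne_{\sim r}(\square\Phi)$ from $\Phi$ via the deduction theorem), so I expect verifying this clause and explicitly invoking the syntactic side condition on R2 to be the main interpretive obstacle in the proof, although no new technical machinery beyond classical propositional reasoning is required. Setting $i=n$ at the end of the induction gives $\Gamma\vdash\Phi\rightarrow\Psi$, completing the argument.
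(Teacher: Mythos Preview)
Your proof is correct and is exactly the standard argument for the deduction theorem in modal/temporal systems where necessitation is restricted to theorems. The paper itself does not supply a proof of this statement: it simply remarks that the deduction theorem ``holds obviously'' and cites a textbook reference (Zach 2018), so there is nothing substantive to compare against; your argument is precisely what that citation would unfold to, and your handling of case~(v)---invoking the syntactic side condition that R2 applies only to theorems---is the right way to discharge the only nonclassical case.
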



%
%
%
%

We give some useful theorem about $\text{AxSys}_{\text{PoCTL}}$.

\begin{lemma}\label{le: lemma for PoCTL}
The following are deducible in $\text{AxSys}_{\text{PoCTL}}$.

  (1) For any $r\in [0,1)$ and any path formula $\varphi$, $\vdash Ne_{>1-r}(\varphi)\rightarrow Po_{\geq r}(\varphi)$;

  (2) $\vdash Po_{\geq r}(\bigcirc \Phi)\wedge Ne_{>1-r}(\bigcirc \Psi)\leftrightarrow (Po_{\geq r}(\bigcirc(\Phi\wedge\Psi))$;

  (3) $\vdash Po_{\geq r}(\bigcirc(\Phi\rightarrow \Psi))\leftrightarrow (Po_{\geq r}(\bigcirc\neg\Phi)\vee Po_{\geq r}(\bigcirc\Psi))$;

  (4) $\vdash Po_{\geq r}(\square\Phi)\rightarrow Po_{\geq r}(\bigcirc\Phi)$;

  (5) $\vdash Ne_{>r}(\square\Phi)\rightarrow Ne_{>r}(\bigcirc\Phi)$;

 (6) $\vdash Po_{\geq r}(\square\Phi)\rightarrow Po_{\geq r}(\lozenge\Phi)$;

 (7) $\vdash Ne_{>r}(\square\Phi)\rightarrow Ne_{>r}(\lozenge\Phi)$;

 (8) $\vdash Po_{\geq r}(\Phi\sqcup\Psi)\rightarrow Po_{\geq r}(\lozenge\Psi)$;

 (9) $\vdash Ne_{>r}(\Phi\sqcup\Psi)\rightarrow Ne_{>r}(\lozenge\Psi)$;

 (10) $Ne_{>r}(\square(\Xi\rightarrow (\neg \Psi\wedge (\Phi\rightarrow Po_{\geq 1-r}(\bigcirc \Xi)))))\rightarrow (\Xi\rightarrow\neg Ne_{>r}(\Phi\sqcup \Psi))$;

 (11) $Ne_{>r}(\square(\Xi \rightarrow (\neg \Psi\wedge Po_{\geq 1-r}(\bigcirc \Xi))))\rightarrow (\Xi\rightarrow\neg Ne_{>r} (\lozenge \Psi)$;

 (12) $Ne_{>r}(\square(\Xi\rightarrow (\neg \Psi\wedge Po_{\geq 1-r}(\bigcirc \Xi))))\rightarrow (\Xi\rightarrow\neg Ne_{>r}(\Phi\sqcup \Psi))$;

 (13) $Ne_{>r}(\square(\Xi\rightarrow (\neg \Psi\wedge (\Phi\rightarrow Ne_{>r}(\bigcirc \Xi)))))\rightarrow (\Xi\rightarrow\neg Po_{\geq 1-r}(\Phi\sqcup \Psi))$;

 (14) $Ne_{>r}(\square(\Xi\rightarrow (\neg \Psi\wedge Ne_{>r}(\bigcirc \Xi))))\rightarrow (\Xi\rightarrow\neg Po_{\geq 1-r}(\lozenge \Psi))$;

 (15) $Po_{\sim r}(\Phi\sqcup \Psi)\leftrightarrow \Psi\vee (\Phi\wedge (Po_{\sim r}(\bigcirc Po_{\sim r}(\Phi\sqcup \Psi))$;

 (16) $Ne_{\sim r}(\Phi\sqcup \Psi)\leftrightarrow \Psi\vee (\Phi\wedge (Ne_{\sim r}(\bigcirc Ne_{\sim r}(\Phi\sqcup \Psi))$.

\end{lemma}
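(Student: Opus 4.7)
The plan is to prove the sixteen items in three batches, working from simpler propositional consequences toward the substantial fixpoint characterizations. For items (1)--(3) and (4)--(9), the main tools are the dualities A4a, A4b, A5 (between $Po$/$Ne$ and between $\square$/$\lozenge$), the equivalences A3a, A3b (reducing $\lozenge$ to $\sqcup$), the distributivity A2 (of $\bigcirc$ over $\vee$), and the pre-fixpoint unfoldings A6, A7. For item (1), I would split on the form of the path formula $\varphi$; taking the next-case $\varphi = \bigcirc \Phi$ as representative, A5 turns $Ne_{>1-r}(\bigcirc \Phi)$ into $\neg Po_{\geq r}(\bigcirc \neg \Phi)$, then instantiating A2 with $\Psi := \neg \Phi$ and invoking A8 (which yields $Po_{\geq r}(\bigcirc true)$ for $r<1$) gives $Po_{\geq r}(\bigcirc \Phi) \vee Po_{\geq r}(\bigcirc \neg \Phi)$, whence the implication follows propositionally; the remaining cases of $\varphi$ reduce to this one via A3a, A3b, A4a, A4b. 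Items (2)--(3) are propositional consequences of A2 together with A5, and items (4)--(9) follow by first unfolding $\square$, $\lozenge$, or $\sqcup$ via A4a/A4b, A3a/A3b, A6, or A7, and then isolating the $\bigcirc$ or $\lozenge$ subformula.

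Items (10)--(14) are induction/invariance principles. Each one is obtainable as an instance of A9 or A10, possibly after dualizing the conclusion with A4a, A4b, A5. The recipe is to take the proposed invariant $\Xi$ from the hypothesis and to rewrite the premise of A9/A10 into the form supplied by the assumption, using propositional manipulation together with the conversion $Po_{\geq 1-r}(\bigcirc \cdot) \leftrightarrow \neg Ne_{>r}(\bigcirc \neg \cdot)$ derived from A5. The deduction theorem then closes the argument.

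The main obstacle lies in items (15)--(16). The $\Leftarrow$ directions are precisely A6 and A7. For the $\Rightarrow$ direction, I would show that $Po_{\sim r}(\Phi\sqcup\Psi)$ (resp.\ $Ne_{\sim r}(\Phi\sqcup\Psi)$) is itself a pre-fixpoint, i.e.\ entails its own unfolding. The idea is to apply A10 (resp.\ A9) with $\Xi$ instantiated to the right-hand side of (15) (resp.\ (16)); the premise of A10 (resp.\ A9) then reduces, after one further application of A6 (resp.\ A7), to a classical tautology to which the necessitation rule R2 can be applied to produce the required $Ne_{\sim r}(\square(\cdot))$ hypothesis. The delicate points are ensuring that the substitution into A10 and A9 is syntactically legal, that necessitation is applied only to theorems and not to open assumptions (as stipulated in Definition \ref{definition of proof}), and managing the case distinction between $\sim$ being $>$ versus $\geq$ so that the matching $\overline{\sim}$ values in the axiom schemes are respected.
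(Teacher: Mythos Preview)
Your proposal is correct and follows essentially the same route as the paper: the paper only spells out item~(2) in detail (using A2, A5, and propositional reasoning exactly as you describe), then remarks that (10) and (13) are contrapositive reformulations of A9 and A10 via A5, that (11) and (14) are the special cases $\Phi = true$, and that (12) follows from (9) and (11), leaving the remaining items as ``similar''. Your treatment of (15)--(16) via A10/A9 with $\Xi$ instantiated to the unfolding, reducing the premise through A6/A7 and closing with necessitation and A11, is in fact more explicit than anything the paper provides for those items.
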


\begin{proof} We give the proof of (2), the others are similar. In which (10) is equivalent to (A9), (11) is the special case of (10) by taking $\Phi=true$, (12) is implied by (9) and (11), (13) is equivalent to (A10), (14) is the special case of (13) by taking $\Phi=true$.

Write $a=Po_{\geq r}(\bigcirc \Phi), b=Ne_{>1-r}(\bigcirc \Psi)$ and $c=Po_{\geq r}(\bigcirc(\Phi\wedge\Psi))$. By the classical proposition, $\vdash ((a\wedge b)\rightarrow c)\leftrightarrow (a\rightarrow (b\rightarrow c))$ always holds, to show $\vdash(a\wedge b)\rightarrow c$, it suffices to show $\vdash a\rightarrow (b\rightarrow c)$. We have the following deduction:

(2.1) $\vdash (b\rightarrow c)\leftrightarrow (\neg b\vee c)$ (by the definition of $\rightarrow$),

(2.2) $\vdash (\neg b\vee c)\leftrightarrow (\neg Ne_{>1-r}(\bigcirc \Psi))\vee Po_{\geq r}(\bigcirc(\Phi\wedge\Psi))$ (by (2.1) and the substitution),

(2.3) $\vdash (\neg Ne_{>1-r}(\bigcirc \Psi))\vee Po_{\geq r}(\bigcirc(\Phi\wedge\Psi) \leftrightarrow
Po_{\geq r}(\bigcirc \neg\Psi))\vee Po_{\geq r}(\bigcirc(\Phi\wedge\Psi)$ (by (2.2), A5 and the substitution),

(2.4)  $\vdash Po_{\geq r}(\bigcirc \neg\Psi))\vee Po_{\geq r}(\bigcirc(\Phi\wedge\Psi)\leftrightarrow
Po_{\geq r}(\bigcirc (\neg\Psi\vee (\Phi\wedge\Psi))$ (by A2),

(2.5)  $\vdash \neg\Psi\vee (\Phi\wedge\Psi)\leftrightarrow (\neg \Psi\vee \Phi)$ (by propositional reasoning),

(2.6)  $\vdash Po_{\geq r}(\bigcirc (\neg\Psi\vee (\Phi\wedge\Psi))\leftrightarrow Po_{\geq r}(\bigcirc \neg\Psi)\vee Po_{\geq r}(\bigcirc \Phi\wedge\Psi))$ (by (2.5), A2 and the substitution),

(2.7) $\vdash (b\rightarrow c)\leftrightarrow Po_{\geq r}(\bigcirc \neg\Psi)\vee a$ (by (2.1)-(2.6) and the substitution),

(2.8)  $\vdash a\rightarrow (Po_{\geq r}(\bigcirc \neg\Psi)\vee a)$ (by propositional reasoning),

(2.9) $\vdash a\rightarrow (b\rightarrow c)$ (by (2.7) , (2.8) and the substitution).

\end{proof}

\subsection{The completeness of $\text{AxSys}_{\text{PoCTL}}$}

This deductive system for PoCTL is easily seen to be sound.

\begin{theorem}\label{thm:completeness of PoCTL}
The above deductive system for PoCTL is complete.
\end{theorem}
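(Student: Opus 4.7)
The plan is to establish completeness by contraposition: assume $\not\vdash \Phi$, so $\neg\Phi$ is $\text{AxSys}_{\text{PoCTL}}$-consistent, and exhibit a PKS model of $\neg\Phi$. Setting $\Lambda = \neg\Phi$ (in positive normal form, at worst doubling length by Theorem~\ref{th: the positive normal  form of PoCTL}), I would reuse the tableau construction from the proof of Theorem~\ref{thm: The satisfiability of PoCTL} verbatim, but now interpret the nodes of the initial tableau $M=(S,P,L)$ as the maximal propositionally consistent subsets of $ecl(\Lambda)$. A routine Lindenbaum-style argument, using only A1 and the propositional fragment, shows that every $\text{AxSys}_{\text{PoCTL}}$-consistent subset of $ecl(\Lambda)$ extends to some $s\in S$; in particular there is $s_0\in S$ with $\Lambda\in s_0$.

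The heart of the argument is a Truth Lemma: for every $s\in S$ surviving the deletion procedure and every $\Psi\in ecl(\Lambda)$, one has $\Psi\in L(s)$ iff $M',s\models\Psi$, where $M'$ is the final tableau. This is proved by induction on the structure of $\Psi$. The Boolean cases are immediate from maximal propositional consistency. For successor formulae $Po_{\sim r}(\bigcirc\Phi)$ and $Ne_{\sim r}(\bigcirc\Phi)$, I would verify both directions using axioms A2, A5, A11, AP1--AP3 together with the explicit formula~\eqref{possibility of state transition} for $P(s,t)$: membership of $Po_{\geq r}(\bigcirc\Phi)$ in $s$ forces the existence of a successor $t$ with $\Phi\in t$ and $P(s,t)\geq r$ (otherwise $s$ would be deleted in step (2-2) of the tableau procedure, contradicting consistency via a Lindenbaum-style construction of a $t$ making $D(s,t)$ contain $1-r^-$ or $1-r$), and the Claims~1--2 of Theorem~\ref{thm: The satisfiability of PoCTL} give the converse. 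For the fixpoint formulae $Po_{\sim r}(\Phi\sqcup\Psi)$ and $Ne_{\sim r}(\Phi\sqcup\Psi)$ (and their $\lozenge$, $\square$ specialisations), the pre-fixpoint axioms A6--A7 and Lemma~\ref{le: lemma for PoCTL}(15)--(16) handle the easy direction, while the induction/least-fixpoint axioms A9--A10 together with Lemma~\ref{le: lemma for PoCTL}(10)--(14) handle the hard direction: if an eventuality $\Upsilon\in L(s)$ were not fulfilled in $M'$, then the set $\Xi$ of surviving nodes where $\Upsilon$ fails would satisfy the hypothesis of the corresponding induction axiom, contradicting $\Upsilon\in s$.

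To run this argument one must also show that no node containing $\Lambda$ is actually deleted by steps (2-1)--(2-3). Step (2-1) deletion is prevented by A8, since $Po_{\sim r}(\bigcirc true)$ for $r\in[0,1)$ together with the successor-existence argument above guarantees at least one outgoing edge. Step (2-2) is handled as sketched in the previous paragraph. The delicate step is (2-3): I would argue by contradiction that if a consistent $s$ contains an eventuality $\Upsilon$ that is never ranked finite, then by collecting all surviving unranked nodes into a predicate $\Xi$ one can derive $\text{AxSys}_{\text{PoCTL}}\vdash\Xi\to\neg\Upsilon$ via the appropriate instance of Lemma~\ref{le: lemma for PoCTL}(10)--(14), contradicting the consistency of $\{\Xi,\Upsilon\}\subseteq s$.

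I expect the main obstacle to be the eventuality step (2-3) of the Truth Lemma, precisely because of the subtlety highlighted in Example~\ref{ex:(LC0-1) does not hold}: the supremum of successor possibility values may not be attained, so the rank-based pseudo-fulfillment in the tableau corresponds to the semantic fulfillment only under the non-attainment-sensitive reading already built into the definitions of $Po_{\geq r}$ and $Ne_{>r}$ in the semantics and of $D(s,t)$ in~\eqref{D(s,t)}. Matching the syntactic side of this bookkeeping (through the $r^-$ shifts arising in A9--A10 and Lemma~\ref{le: lemma for PoCTL}) with the semantic side inside the finite tableau is where the axiom schemes A9--A10 and the downward-closure Claim~1 of Theorem~\ref{thm: The satisfiability of PoCTL} must be orchestrated carefully; once the rank-vs-consistency translation is verified, the Truth Lemma, and hence completeness, follows because $\Lambda\in L(s_0)$ yields $M',s_0\models\Lambda$.
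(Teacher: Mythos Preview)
Your plan and the paper's proof share the same core: the inductive lemma that every node eliminated by the tableau procedure is provably inconsistent in $\text{AxSys}_{\text{PoCTL}}$ (the paper's Claim~1), established case by case just as you sketch---A8 for step~(2-1), A5/A11 together with the auxiliary inconsistency fact (the paper's Claim~2: $P(s,t)<r$ forces $\wedge s\wedge Po_{\geq r}(\bigcirc\wedge t)$ inconsistent) for step~(2-2), and the induction axioms A9--A10 in the forms of Lemma~\ref{le: lemma for PoCTL}(10)--(14) applied to a carefully chosen disjunction $\Xi$ for step~(2-3). The only difference is the outer scaffolding: rather than contraposing and proving a Truth Lemma, the paper argues directly that if $\Lambda$ is valid then $\neg\Lambda$ is unsatisfiable, so by the correctness of the decision procedure (Theorem~\ref{thm: The satisfiability of PoCTL}) every node labelled $\neg\Lambda$ is deleted, hence by Claim~1 each such node is provably inconsistent, and $\vdash\Lambda$ then follows by propositional reasoning from $\vdash\neg\Lambda\leftrightarrow\bigvee\{\wedge s:\neg\Lambda\in s,\ \wedge s\text{ consistent}\}$. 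This reuses Theorems~\ref{thm: Small model theorem for PoCTL}--\ref{thm: The satisfiability of PoCTL} as a black box and makes your separate Truth-Lemma step unnecessary; note too that the Truth Lemma for surviving nodes is a purely semantic fact about the final tableau (it is exactly the pseudo-Hintikka-to-model direction already proved there), so the axioms you cite in that paragraph really belong only in the ``consistent $\Rightarrow$ not deleted'' argument.
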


\begin{proof}

Suppose $\Lambda$ is valid, let us show that $\vdash \Lambda$ . Then $\neg \Lambda$ is unsatisfiable, we apply the tableau-based decision procedure in Theorem \ref{thm: The satisfiability of PoCTL} to $\neg \Lambda$. All nodes whose label include $\neg\Lambda$ will be eliminated. In the sequel, we use the following notation and terminology. We use $\wedge s$ to denote the conjunction of all formulae labelling node $s$. We also write $\Phi\in s$ for all $\Phi\in L(s)=s$, and we say that formula $\Phi$ is consistent provided that $\vdash \neg \Phi$ does not hold.

{\bf Claim 1}: If node $s$ is deleted, then $\vdash \neg(\wedge s)$.

Assuming the claim, we will show that $\vdash \Lambda$. We will use the formulae below, whose
validity can be established by propositional reasoning for each formula $\Phi\in ecl(\Lambda)$:
\begin{equation}\label{eq: propositional reasoning}
\begin{aligned}
&\vdash \Phi \leftrightarrow \vee \{\wedge s: s \ {\rm is \ a \ node \ in \ the \ tableau \ and}\ \Phi\in s \}\\
&\leftrightarrow \vee\{\wedge s: s {\rm\ is\ a\ node\ in\ the\ tableau},\ \Phi\in s \ {\rm and}\ \wedge s \ {\rm is \ consistent} \}.
\end{aligned}
\end{equation}

$\vdash true \leftrightarrow \vee \{\wedge s: s$ is a node in the tableau $\}$
$\leftrightarrow \vee\{\wedge s: s$ is a node in the tableau and $\wedge s$ is consistent$\}$.

Thus,
$\vdash \neg \Lambda\leftrightarrow \vee \{ \wedge s: s$ is a node in the tableau and $\wedge s$ is consistent$\}$.

Because $\neg \Lambda$ is unsatisfiable, the decision procedure will delete each node $s$ containing $\neg\Lambda$ in its
label. By Claim 1 above, for each such node $s$ that is eliminated, $\vdash \neg\wedge s$. It follows that $\vdash\wedge\{\neg\wedge s: s $ is a node in the tableau and $\neg\Lambda\in s\}$. Thus we have $\vdash\neg\neg \Lambda$ and also $\vdash \Lambda$.

Before proving Claim 1, we establish the following claim.

{\bf Claim 2}: For $r\in (0,1]$, if $P(s,t)<r$, then $\wedge s\wedge Po_{\geq r}(\bigcirc\wedge t)$ is inconsistent.

{\bf Proof}
Suppose $P(s,t)<r$, then, by the definition of $P(s,t)$ in Eq.(\ref{possibility of state transition}) and $D(s,t)$ in Eq.(\ref{D(s,t)}), there is a formula $Ne_{>1-r}(\bigcirc \Phi)\in ecl(\Lambda)$, $Ne_{>1-r}(\bigcirc \Phi)\in s$ and $\neg \Phi\in t$. Thus we have the following deduction:

(a) $\vdash \wedge s\rightarrow Ne_{>1-r}(\bigcirc \Phi)$ (since $Ne_{>1-r}(\bigcirc \Phi)\in s$),

(b) $\vdash \wedge t\rightarrow \neg \Phi$ (since $\neg \Phi\in t$),

(c) $\vdash Ne_{>1-r}(\square(\wedge t\rightarrow \neg \Phi))$ (by (b) and necessitation rules),

(d) $\vdash Po_{\geq r}(\bigcirc\wedge t)\rightarrow Po_{\geq r}\bigcirc(\neg \Phi)$ (by A11 and MP rule),

(e) $\vdash\wedge s\wedge Po_{\geq r}(\bigcirc\wedge t)\rightarrow Ne_{>1-r}(\bigcirc \Phi)\wedge Po_{\geq r}\bigcirc(\neg \Phi)$ (by (a), (d) and propositional reasoning),

(f) $\vdash\wedge s\wedge Po_{\geq r}(\bigcirc\wedge t) \rightarrow false$ (by (e) and (A5)),

(g) $\vdash\neg(\wedge s\wedge Po_{\geq r}(\bigcirc\wedge t))$ (by (f) and propositional reasoning).

Hence,  $\wedge s\wedge Po_{\geq r}(\bigcirc\wedge t)$ is inconsistent. This completes the
proof of Claim 2.

{\bf Claim 3}: For $r\in (0,1]$, if $P(s,t)\leq r$, then $\wedge s\wedge Po_{>r}(\bigcirc\wedge t)$ is inconsistent.

The proof is similar to the proof of Claim 2.

We are now ready to give the proof of Claim 1. We argue by induction on when
a node is deleted that if node $s$ is deleted then $\vdash \neg\wedge s$.

Case 1: if $\wedge s$ is consistent, then $s$ is not deleted on account of having no successors.

To see this, for any $r\in (0,1)$, we note that we can prove $\vdash \wedge s \leftrightarrow \wedge  s \wedge  Po_{\geq r}(\bigcirc true)$ ($Po_{\geq r}(\bigcirc true)$ is an axiom (A8))
$\leftrightarrow \wedge  s \wedge  Po_{\geq r}(\bigcirc( \vee\{\wedge  t: \wedge  t$ is consistent and $t$ is a node$\}))$ $\leftrightarrow \wedge s\wedge (\vee \{Po_{\geq r}(\bigcirc \wedge t): \wedge t$ is consistent and $t$ is a node$\})$
$\leftrightarrow \vee\{\wedge s \wedge Po_{\geq r}(\bigcirc \wedge  t): \wedge  t$ is consistent $\})$.

Thus if $\wedge s$ is consistent, $\wedge s \wedge  Po_{\geq r}(\bigcirc \wedge t)$ is consistent for some node $t$. By Claim 2 above, $P(s,t)\geq r$ in the original tableau. By induction hypothesis, node $t$ is not eliminated. Thus $P(s, t)\geq r$
in the current tableau, and node $s$ is not eliminated due to having successors.

Case 2: node $s$ is eliminated because (LC1-2) does not hold, i.e., $Po_{\geq r}(\bigcirc \Psi)\in s$, but for any $t$, $\Psi\in t$, $P(s,t)<r$. By Claim 2, $\wedge s\wedge Po_{\geq r}(\bigcirc \wedge t)$ is inconsistent, i.e., $\vdash \neg(\wedge s\wedge Po_{\geq r}(\bigcirc \wedge t)$ consistent node $t$. Thus,

$\vdash \wedge\{\neg(\wedge s\wedge Po_{\geq r}(\bigcirc \wedge t): t$ is an consistent node$\}\leftrightarrow \neg(\wedge s\wedge \vee\{Po_{\geq r}(\bigcirc \wedge t): t$ is an consistent node$\})\leftrightarrow \neg(\wedge s\wedge Po_{\geq r}(\bigcirc \vee\{\wedge t: t$ is an consistent node$\})\leftrightarrow \neg(\wedge s\wedge Po_{\geq r}(\bigcirc true)) \leftrightarrow \neg(\wedge s\wedge true)\leftrightarrow \neg(\wedge s)$.

Hence, $\vdash \neg(\wedge s)$.

Case 3: node $s$ is deleted because $Po_{\geq r}(\Phi\sqcup \Psi)\in s$, which is not fulfilled (ranked) at $s$.

Let $T=\{t: Po_{\geq r}(\Phi\sqcup \Psi)\in t$ but not fulfilled $\}\cup \{t: Po_{\geq r}(\Phi\sqcup \Psi)\not\in t\}$.

Note that $s\in T$. Moreover, the complement of $T$ is the set $\{t: Po_{\geq r}(\Phi\sqcup \Psi)\in t$ is fulfilled $\}$.

Let $\Xi = \vee\{\wedge t: t\in T\}$. We claim that $\vdash \Xi\rightarrow (\neg \Psi\wedge (\Phi\rightarrow Ne_{>1-r}(\bigcirc \Xi))$.

First note that $\neg \Psi\in t$ for each $t\in T$. We show this fact in two subcases. Subcase 1: $Po_{\geq r}(\Phi\sqcup \Psi)\in t$. In this case, by the definition of the set $T$ and $t\in T$, it follows that $Po_{\geq r}(\Phi\sqcup \Psi)$  is not fulfilled at $t$, and thus $\Psi\not\in t$, i.e., $\neg \Psi\in t$. Subcase 2,  $Po_{\geq r}(\Phi\sqcup \Psi)\not\in t$. In this case, $\neg Po_{\geq r}(\Phi\sqcup \Psi)\in t$, the later implies that $\Psi\not\in t$ and thus $\neg \Psi\in t$.

Since $\neg \Psi\in t$ for each $t\in T$, and $\vdash \wedge t\rightarrow \neg\Psi$, it is clear that $\vdash \Xi\rightarrow \neg \Psi$. We must now show that $\vdash \Xi\rightarrow (\Phi\rightarrow Ne_{>1-r}(\bigcirc \Xi))$. It
suffices to show that, for each $t\in T$, $\vdash \wedge t\rightarrow (\Phi\rightarrow Ne_{>1-r}(\bigcirc \Xi))$. Suppose not, then there is $t\in T$, $\wedge t\wedge \Phi\wedge Po_{\geq r}(\bigcirc \neg \Xi)$ is consistent.
Now we prove that $\neg \Xi=\vee \{\wedge t^{\prime}: t^{\prime}\not\in T\}$.

This is because, $\neg \Psi\in t$ for any $t\in T$, and $\Psi\in t^{\prime}$ for any $t^{\prime}\not\in T$ (since $Po_{\geq r}(\Phi\sqcup \Psi)\in t^{\prime}$ and fulfilled). It follows that $(\wedge t)\wedge (\wedge t^{\prime})=false$. Since $true=\vee\{\wedge t: t$ is a consistent node in the tableau$\}=\vee \{\wedge t:t\in T\}\vee \vee\{\wedge t:t\not\in T\}$ and $(\wedge t)\wedge (\wedge t^{\prime})=false$ for any $t\in T$ and $t^{\prime}\not\in T$, it follows that $\neg \Xi=\vee\{\wedge t^{\prime}: t^{\prime}\not\in T\}$.

From the fact $\wedge t\wedge \Phi\wedge Po_{\geq r}(\bigcirc \neg \Xi))$ is consistent, it follows that $\exists t\in T$ and $\exists t^{\prime}\not\in T$, $\wedge t\wedge \Phi\wedge Po_{\geq r}(\bigcirc \wedge t^{\prime})$ is consistent. It follows that both $\wedge t \wedge Po_{\geq r}(\bigcirc \wedge t^{\prime})$ and $\Phi\wedge Po_{\geq r}(\bigcirc \wedge t^{\prime})$ are consistent. Therefore, $\wedge t$ and $\wedge t^{\prime}$ are each consistent, neither is eliminated by induction hypothesis,
 $P(t,t^{\prime})\geq r$ as originally constructed by Claim 2 above. Combined with the fact $\wedge t\wedge \Phi$ and $\Phi\wedge Po_{\geq r}(\bigcirc \wedge t^{\prime})$ are consistent, it follows that $Po_{\geq r}(\Phi\sqcup \Psi)\in t^{\prime}$ and $t^{\prime}\not\in T$. Since $t^{\prime}\not\in T$, $Po_{\geq r}(\Phi\sqcup \Psi)\in t^{\prime}$ and is ranked. But by the virtue of the arc $(t, t^{\prime})$ in the tableau,
$t$ should also be ranked for $Po_{\geq r}(\lozenge \Psi)$, a contradiction to $t$ being a member of $T$. Thus
$\vdash \Xi\rightarrow (\Phi\rightarrow Ne_{>1-r}(\bigcirc \Xi))$.

By necessitation rules, $\vdash Ne_{>1-r}(\square(\Xi\rightarrow (\neg \Psi \wedge (\Phi\rightarrow Ne_{>1-r}(\bigcirc \Xi)))$, and by the axiom (A10) (in the form of Lemma \ref{le: lemma for PoCTL}(13)) and MP rule, $\vdash \Xi\rightarrow \neg Po_{\geq r}(\Phi\sqcup \Psi)$. Now $\vdash \wedge s\rightarrow \Xi$ by definition of $\Xi$ (as the disjunction of formulae for each state in $T$, which includes node $s$). However, we assumed $Po_{\geq r}(\Phi\sqcup \Psi)\in s$, which means that $\vdash \wedge s\rightarrow Po_{\geq r}(\Phi\sqcup \Psi)$. Thus $\vdash \wedge s\rightarrow false$. Hence, $\wedge s$ is inconsistent.

Case 4: node $s$ is deleted because $Po_{\geq r}(\lozenge \Psi)\in s$, which is not fulfilled (ranked) at $s$. Case 4 is a special case of case 3 and thus holds true.

Case 5: node $s$ is deleted because $Ne_{>1-r}(\Phi\sqcup \Psi)\in s$, which is not fulfilled (ranked) at $s$.

Let $T=\{t: Ne_{>1-r}(\Phi\sqcup \Psi)\in t$ but not fulfilled$\}$. By assumption, $s\in T$.
For any $t\in T$, since $Ne_{>1-r}(\Phi\sqcup \Psi)\in t$ is not fulfilled, it follows that $\Psi\not\in t$, and thus $\neg\Psi\in t$ and $\vdash \wedge t\rightarrow \neg\Psi$. On the other hand, for any $t\in T$, since $Ne_{>1-r}(\Phi\sqcup \Psi)\in t$, it follows that $\vdash \wedge t\rightarrow Ne_{>1-r}(\Phi\sqcup \Psi)$. By (A7) and $\vdash \wedge t\rightarrow \neg\Psi$, it follows that $\vdash \wedge t\rightarrow Ne_{>1-r}(\bigcirc Ne_{>1-r}(\Phi\sqcup \Psi))\wedge \neg \Psi$ for each $t\in T$.

Let $\Xi=\vee\{\wedge t: t\in T\}$. Clearly, $\vdash \Xi\rightarrow \neg \Psi$.

Suppose we can show that $\vdash \Xi\rightarrow Po_{\geq r}(\bigcirc \Xi)$. Then, $\vdash \Xi\rightarrow \neg \Psi\wedge Po_{\geq r}(\bigcirc \Xi)$. By Lemma \ref{le: lemma for PoCTL}(12), $\vdash \Xi\rightarrow \neg Ne_{>1-r}(\Phi\sqcup \Psi)$. Since $s\in T$ and thus $\vdash \wedge s \rightarrow \neg Ne_{>1-r}(\Phi\sqcup \Psi)$. It follows that $\wedge s$ is
inconsistent.

We want to show $\vdash \Xi\rightarrow Po_{\geq r}(\bigcirc \Xi)$. It suffices to show that for
each $t\in T$, $\vdash \wedge t\rightarrow Po_{\geq r}(\bigcirc \Xi)$. Given $t\in T$, let $\Gamma_t=\{\Psi': Po_{\geq r}(\bigcirc \Psi')\in t\}\cup\{true\}$, and let $\Delta_t=\{\Psi'': Ne_{>1-r}(\bigcirc \Psi'')\in t\}$.

For each $\Psi^{\prime}\in \Gamma_t$, define $\widetilde{\Psi'}= \Psi'\wedge(\wedge\{\Psi'': \Psi''\in \Delta_t\})$ and let
$S_{\Psi'}=\{t': P(t,t')\geq r, \vdash \wedge t'\rightarrow \Psi' $ (or $\Psi'\in t')\}$. First, let us prove the following two facts.

(i) $\vdash \wedge t\rightarrow Po_{\geq r}(\bigcirc \widetilde{\Psi'})$, and

(ii) $\vdash \widetilde{\Psi'}\leftrightarrow \vee\{\wedge t': t'\in S_{\Psi'}\}$.

Note that $\vdash \wedge t\rightarrow \Psi'$ for any $\Psi'\in t$. If $Po_{\geq r}(\bigcirc \Psi')\in t$, then $\Psi\in \Gamma_t$. In this case, $\vdash \wedge t\rightarrow Po_{\geq r}(\bigcirc \Psi')\wedge \wedge\{Ne_{>1-r}(\bigcirc \Psi''): Ne_{>1-r}(\bigcirc \Psi'')\in t\}$. By Lemma \ref{le: lemma for PoCTL} (2), we have, $\vdash Po_{\geq r}(\bigcirc \Psi')\wedge \wedge\{Ne_{>1-r}(\bigcirc \Psi''): Ne_{>1-r}(\bigcirc \Psi'')\in t\}\rightarrow Po_{\geq r}(\bigcirc (\Psi'\wedge \wedge\{\Psi'': Ne_{>1-r}(\bigcirc \Psi'')\in t\}$. It follows that $\vdash \wedge t\rightarrow Po_{\geq r}(\bigcirc \widetilde{\Psi'})$. This prove the fact (i).

For fact (ii), using Eq.(\ref{eq: propositional reasoning}), we have
$\vdash \widetilde{\Psi'}\leftrightarrow\vee\{\wedge t': t'$ is a node and $\widetilde{\Psi'}\in t'\}\leftrightarrow\vee\{\wedge t': t'$ is a node, $\Psi'\in t'$ and $\Psi''\in t'$ for any $\Psi''\in \Delta_t \}\leftrightarrow\vee\{\wedge t': t'$ is a node, $P(s,s')\geq r$ and $\Psi'\in t'\}$. On the other hand, using Eq.(\ref{eq: propositional reasoning}), we have $\vee\{\wedge t': t'\in S_{\Psi'}\}\leftrightarrow\vee\{\wedge t': t'$ is a node, $P(s,s')\geq r$ and $\Psi'\in t'\}$. This shows the fact (ii).

Note also that for each $t'\in S_{\Psi'}$, $Ne_{>1-r}(\Phi\sqcup \Psi)\in t'$ (since
$Ne_{>1-r}(\bigcirc Ne_{>1-r}(\Phi\sqcup \Psi)\in t$ and the definition of $P(s,t)$ in Eq.(\ref{possibility of state transition})). Now, if for each $\Psi' \in \Gamma_t$ there is a $t'\in S_{\Psi'}$,
such that $Ne_{>1-r}(\Phi\sqcup \Psi)$ is fulfilled at $t'$, then we can show
that $Ne_{>1-r}(\Phi\sqcup \Psi)$ is fulfilled at $t$ as well, which contradicts
the assumption that $t\in T$. So
it must be that for some $\Psi'\in \Gamma_t$ and for all $t' \in S_{\Psi'}$, we have
$t'\in T$. For this $\Psi'$, $S_{\Psi'}\subseteq T$, By (ii) above, it follows that
$\vdash \widetilde{\Psi'}\rightarrow \Xi$. Using (i), we obtain $\vdash \wedge t\rightarrow  Po_{\geq r}(\bigcirc \Xi)$.

Case 6: node $s$ is deleted because $Ne_{>1-r}(\lozenge \Psi)\in s$, which is not fulfilled (ranked) at $s$. Case 6 is a special case of case 5 and therefore holds true.

Combining with Claim 3, similar techniques can be applied to other cases as (LC1-1) does not hold (corresponding to case 2 above), $Po_{>r}(\Phi\sqcup \Psi)$ (corresponding to case 3 above), $Po_{>r}(\lozenge \Psi)$ (corresponding to case 4 above), $Ne_{\geq r}(\Phi\sqcup \Psi)$ (corresponding to case 5 above), $Ne_{\geq r}(\lozenge \Psi)$ (corresponding to case 6 above).

\end{proof}

By Deduction theorem and Theorem \ref{thm:completeness of PoCTL}, we have the following corollary.

\begin{corollary}\label{sound and weak complete}
The axiomatic systems $\text{AxSys}_{\text{PoCTL}}$ are sound and weak complete, i.e., for every finite set of PoCTL formuale $\{\Phi_1,\cdots,\Phi_n,\Phi\}$,

$\Phi_1,\cdots,\Phi_n\vdash\Phi \ {\rm if\ and \ only\ if}\ \Phi_1,\cdots,\Phi_n\models\Phi.$
\end{corollary}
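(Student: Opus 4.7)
The plan is to reduce both directions to machinery already in place, namely the soundness of individual axioms and rules, Theorem \ref{thm:completeness of PoCTL}, and the Deduction theorem.

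For the soundness direction ($\Rightarrow$), I would argue by induction on the length of a proof of $\Phi_1,\ldots,\Phi_n\vdash\Phi$. Base cases: (i) classical tautologies (A1) are valid state-formulae by inspection of the Boolean clauses of Definition \ref{def:semantics of PoCTL}; (ii) each of A2--A11 and AP1--AP3 is checked against Definition \ref{def:semantics of PoCTL} together with equations (9)--(12), using Proposition \ref{pro: the equivalence of PoCTL formulae} for the duality/monotonicity facts between $Po_{\sim r}$ and $Ne_{\sim r}$; (iii) assumption steps $\Phi=\Phi_i$ are trivial under the hypothesis $M,s\models\Phi_1,\ldots,\Phi_n$. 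For the inductive step, MP preserves truth at a fixed $(M,s)$; the necessitation rules R2(1) and R2(2) are applied only to already derived theorems (not to the open assumptions $\Phi_i$, as per Definition \ref{definition of proof}), so it suffices to observe that if $\models\Psi$ holds at every state of every PKS, then $\models Ne_{\sim r}(\bigcirc\Psi)$ and $\models Ne_{\sim r}(\square\Psi)$ hold as well, which is immediate from the path semantics.

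For the weak completeness direction ($\Leftarrow$), I would iterate the Deduction theorem to reduce the statement with premises to the premise-free completeness already proved in Theorem \ref{thm:completeness of PoCTL}. Specifically, assume $\Phi_1,\ldots,\Phi_n\models\Phi$. By Definition \ref{definition_logical consequence} and the semantic clause for $\to$, this is equivalent to
\[
\models\Phi_1\to(\Phi_2\to\cdots\to(\Phi_n\to\Phi)\cdots),
\]
i.e.\ the iterated implication is a valid PoCTL formula. Theorem \ref{thm:completeness of PoCTL} then gives
\[
\vdash\Phi_1\to(\Phi_2\to\cdots\to(\Phi_n\to\Phi)\cdots).
\]
Applying the Deduction theorem in the reverse direction $n$ times yields $\Phi_1,\ldots,\Phi_n\vdash\Phi$, as required.

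The only subtle point, and what I would flag as the main obstacle to be careful about, is the interaction of the Deduction theorem with the necessitation rules R2. Because necessitation must not be applied to open assumptions, the Deduction theorem as stated can be iterated $n$ times without harm, but one should verify that the intermediate formulas $\Phi_{k+1},\ldots,\Phi_n\vdash\Phi_k\to\cdots\to\Phi_n\to\Phi$ still respect this restriction; this is exactly the calibration of Definition \ref{definition of proof} (where necessitation is restricted to theorems), and it matches the version of the Deduction theorem cited from \cite{Zach2018}. Once this bookkeeping is done, both directions are immediate.
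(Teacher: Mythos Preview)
Your proposal is correct and follows essentially the same route as the paper: the paper derives the corollary in one line from the Deduction theorem and Theorem~\ref{thm:completeness of PoCTL}, having already noted that soundness is ``easily seen''; your argument simply spells out both halves in more detail, including the standard induction for soundness and the iterated use of the Deduction theorem for the completeness direction. The subtlety you flag about necessitation being restricted to theorems is exactly the reason the paper's version of the Deduction theorem holds, so your bookkeeping is appropriate.
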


\begin{remark}\label{re:compact of PoCTL}
Since CTL is a proper sublogic of PoCTL (\cite{PoCTL2015}), it follows, like CTL, PoCTL is not compact, i.e., PoCTL containing infinite sets of formulae which are not satisfiable but every finite subset of them is satisfiable. In fact, we can give such instance using pure PoCTL formulae instead of CTL formulaea as follow. According to the semantics of PoCTL, the set
  \begin{equation*}
    \Gamma=\{Po_{<1}(\bigcirc a)\}\cup\{Po_{\geq 1-1/n}(\bigcirc a)\ |\ n\in\mathbb{N}^{+},n\geq 1\}
  \end{equation*}
is an unsatisfiable set of formulas, where $a\in AP$. It is evident that every finite subset of $\Gamma$ is satisfiable. Hence, this logic is not compact. In this situation, any finitary axiomatization fails to be strongly complete \cite{infinite-axiom-1992}. To achieve strong completeness, we need employ some infinitary rules as done in \cite{He25}, c.f., \cite{Castanoa25}. We will leave the strong complete axiomatization of PoCTL in the next study.
\end{remark}

\section{Conclusion}

By introducing some methods to extract possibility information from a PoCTL formula, this paper constructs the possibilistic Hintikka structure for the PoCTL formula, combining with the quantitative properties of the possibility measure, completely solve the satisfiability and axiomatization problems of the possibilistic computation trees logic (PoCTL). Compared with probabilistic computation trees logic (PCTL), which is undecidable (in fact, highly undecidable) and non-axiomatizable \cite{Chodil25}, PoCTL is decidable and axiomatizable. The resolution of these issues has laid a solid foundation for the further application of PoCTL, including using logical reasoning methods to perform model checking corresponding to PoCTL, and to leverage its role in formal verification.

Of course, we still have many issues to solve. PoCTL cannot completely solve the formal description and corresponding verification of real systems in fuzzy environments, so we propose a generalized possibilistic computation tree logic (GPoCTL) (\cite{GPoCTL2015}) or more extensive generalized possibilistic fuzzy temporal logic (GPoFTL) (\cite{GPoFLTL}) (including fuzzy temporal operators such as“soon”,  “presently”, “gradually”, “within”, “last”, “nearly always”, “almost always”, “in the long distant future”, “in the middle of”,  “nearly until” , “almost until”) and the generalized possibility computation tree logic with frequency (GPoCTL$_F$) in \cite{He24}. Under these logics, the satisfiability problem and axiomatic problem are more complex issues, and in some case undecidable c.f.\cite{Frigeri14,Almagor16,Vidal22}. As mentioned in \cite{Fagina24,Aguilera25,Fussnera25}, the problem of axiomatization under fuzzy logic and fuzzy temporal logic is a more challenging one, the satisfiability and axiomatization problems of GPoCTL or GPoFTL require more quantitative techniques to handle, and we will leave them as the future topic to be studied.

In addition, we have recently established a generalized possibilistic decision-making process (GPDP) for practical systems with nondeterministic action selection \cite{li23,liu23}. How to choose the optimal scheduler or policy to attain the object (described by some temporal formulae) is a challenging task of GPDP. As mentioned in  \cite{Chatterjee10}, many problems in probabilistic decision-making process are undecidable.  Therefore, it is interesting to study whether the related satisfiability problems of GPoCTL or GPoFTL under GPDP are still decidable, and provide relevant algorithms. This forms another future topic to be studied.

\section*{Acknowledgments}


This article was supported by the National Science Foundation of China (Grants No. 12471437, 12071271) and the Shaanxi Fundamental Science Research
Project for Mathematics and Physics (Grant No. 23JSZ011).


\begin{thebibliography}{99}
\baselineskip 10pt

\bibitem{Aguilera25}
J.P. Aguilera, M.Di\'{e}guez, D. Fern\'{a}ndez-Duque, B. McLean, Gödel–Dummett linear temporal logic, Artificial Intelligence 338 (2025) 104236.

\bibitem{Almagor16}
S. Almagor, U. Boker, and O. Kupferman, Formally reasoning about
 quality, Journal of ACM  63(3)(2016) 1–56.


\bibitem{Baier08} C. Baier, J. P. Katoen, Principles of Model Checking, The MIT Press, Cambridge, 2008.




\bibitem{Bouyer23}
P. Bouyer, O. Kupferman, N. Markey, B. Maubert, A. Murano, G. Perelli, Reasoning about quality and fuzziness of strategic behaviors, ACM Transactions on Computational Logic 24 (3) (2023) 21.



\bibitem{Castanoa25}
D. Casta\~{n}oa, J. P. D\'{i}az Varelaa, G. Savoy, Strong standard completeness theorems for S5-modal Łukasiewicz logics, Annals of Pure and Applied Logic 176(2025)103529.

\bibitem{Chakraborty16}
S. Chakraborty and J. -P. Katoen, On the satisfiability of some simple probabilistic logics, 2016 31st Annual ACM/IEEE Symposium on Logic in Computer Science (LICS), New York, NY, USA, 2016, pp. 1-10.

\bibitem{Chatterjee10}
K. Chatterjee, T. Henzinger, Probabilistic automata on infinite words: decidability and undecidability results, In  A. Bouajjani and W.-N. Chin, eds., International Symposium on Automated Technology for Verification and Analysis, Lecture Notes in Computer Science 6255(2010) 1-16.

\bibitem{Chodil22}
M. Chodil1 , A. Ku\v{c}era1, J. K\v{r}ret\'{i}nsk\'{y}, Satisfiability of quantitative probabilistic CTL: rise to the challenge,in J.-F. Raskin and K. Chatterjee (Eds.): Principles of Systems Design, Lecture Notes in Computer Science 13660(2022) 364–387. 

\bibitem{Chodil24}
M. Chodil, A. Ku\v{c}era1,   The satisfiability problem for a quantitative fragment of PCTL, Journal of Computer and System Sciences 139(2024) 103478.

\bibitem{Chodil25}
M. Chodil, A. Ku\v{c}era1, The satisfiability and validity problems for
probabilistic computational tree logic are highly undecidable, arXiv:2504.19207v2 [cs.LO] 30 Apr 2025.


\bibitem {Clarke86}
E.Clarke, E. Emerson, A.Sistla, Automatic verification of finite-State
concurrent systems using temporal logic
specifications, ACM Transactions on Programming Languages and Systems 8(2)(1986) 244-263.


\bibitem{Demri16}
S. Demri, V. Goranko, M. Lange, Temporal Logics in Computer Science: Finite-State Systems, Cambridge University Press, Cambridge, 2016.

\bibitem{Doder2024}
D. Doder, Z. Ognjanović, Probabilistic temporal logic with countably additive semantics, Annals of Pure and Applied Logic 175(9) (2024) 103389.






\bibitem{Dubois88}
D. Dubois, H. Prade, Possibility Theory, Plenum, New York, 1988.

\bibitem{Dubois01}
D. Dubois, H. Prade, Possibility theory, probability theory and multiple-valued logics: A clarification, Annals of Mathematics and Artificial Intelligence 32(2001) 35-66.


\bibitem{EGP92}
M. Edmund, O. Grumberg, D. Peled, Model Checking, the MIT Press, Cambridge, 1999.

\bibitem{Emerson90}
E. Emerson, Temporal and modal logic, in ``Handbook of Theoretical Computer Science'' Edited by J. van Leeuwen, Elsevier Science Publishers B.V, 1990, pp 998-1068.

\bibitem{EH85}
E. Emerson, J. Y. Halpern, Decision procedures and expressiveness in the temporal logic of branching time, Journal of Computer and System Sciences 30(1) (1985) 1-24.

\bibitem{Fagina24}
R. Fagina, R. Riegelb, A. Gray, Foundations of reasoning with uncertainty via real-valued logics, Proceedings of the National Academy of Sciences 121(21)(2024) e2309905121(1-12).

\bibitem{Frigeri14}
A. Frigeri, L. Pasquale, and P. Spoletini, Fuzzy time in linear temporal
 logic, ACM Transactions on Computational Logic 15(4)(2014) 30(1–22).

\bibitem{Fussnera25}
W. Fussnera, S. Santschi, Interpolation in H\'{a}jek’s basic logic, Annals of Pure and Applied Logic 176 (2025) 103615.

\bibitem{grabisch00}
M. Grabisch, T. Murofushi, M. Sugeno (eds), Fuzzy Measures and Integrals, Physica-Verlag, Heidelberg New Tork, 2000.



\bibitem {Hart06}	
S. Hart, M. Sharir, Probabilistic propositional temporal logics, Information and Control 70(2-3)(1986) 97-155.

\bibitem {He24}
Q. He, W. Liu, Y. Li, Generalized possibility computation tree logic with frequency and its model checking, International Journal of Approxmate Reasoning 173(2024)109249.

\bibitem {He25}
Q. He, Y. Li, A strongly complete axiomatic system for possibilistic linear temporal logic, 2025, preprint.





\bibitem{li13}Y. Li, L. Li, Model checking of linear-time properties based on possibility measure, IEEE Transactions on Fuzzy Systems 21(5)(2013) 842-854.

\bibitem{PoCTL2015}
Y. Li, Y. Li, Z. Ma, Computation tree logic model checking based on possibility measures, Fuzzy Sets and Systems 262 (2015) 44-59.


\bibitem{GPoCTL2015}
Y. Li, Z. Ma, Quantitative computation tree logic model checking based on generalized possibility measures, IEEE Transactions on Fuzzy Systems 23(6) (2015) 2034-2047.

\bibitem{multi-valued2019}
Y. Li, L. Lei, S. Li, Computation tree logic model checking based on multi-valued possibility measures, Information Sciences 485 (2019) 87-113.

\bibitem{GPoFLTL}
Y. Li, J. Wei, Possibilistic fuzzy linear temporal logic and its model checking, IEEE Transactions on Fuzzy Systems 29(7) (2021) 1899-1913.

\bibitem{li23}
Y. Li, W. Liu, J. Wang, X. Yu, C. Li, Model checking of possibilistic linear-time properties based on generalized possibilistic decision processes, IEEE Transactions on Fuzzy Systems 31(10) (2023) 3495-3506.



\bibitem{Lichtenstein00}
O. Lichtenstein, A. Pnueli, Propositional temporal logics: decidability and completeness, Logic Journal of the IGPL 8 (1) (2000) 55–85.



\bibitem{liu23}
W. Liu, Y. Li, Optimal strategy model checking in possibilistic decision processes, IEEE Transactions on Systems, Man and Cyberneics-Systems 53(10)(2023) 6620-6632.


\bibitem{PLTL2006}
Z. Ognjanović, Discrete linear-time probabilistic logics: completeness, decidability and complexity, Journal of Logic and Computation 16(2) (2006) 257-285.

\bibitem{PCTL*-axiomatization}
 Z. Ognjanović, D. Doder, Z. Marković,. A branching time logic with two types of probability operators. In: Scalable Uncertainty Management, Springer, 2011, pp. 219-232.

\bibitem{Prior1957}
A. Prior, Time and Modality, Oxford University Press, Oxford, 1957.

\bibitem{Pnueli77}
A. Pnueli, The temporal logic of programs, in: 18th Annual Symposium on Foundations of Computer Science, IEEE, 1977, pp. 46-67.

\bibitem{infinite-axiom-1992}  W. van der Hoek, Some considerations on the logic pfd, in: Logic Programming, Springer, 1992, pp. 474-485.

\bibitem{Vidal22}
A. Vidal, Undecidability and non-axiomatization of modal many-valued logics, The Journal of Symbolic Logic
87(4)(2022) 1576-1605.





\bibitem{Xue11}
Y. Xue, H. Lei, Y. Li, Computationg tree logic based on possibility measure(in Chinese), Computer Engineering and Science 33(9)(2011) 70-75.


 \bibitem{Zach2018}
 R. Zach, Boxes and Diamonds: An Open Introduction to Modal Logic, University of Calgary, 2018.


\bibitem{Zadeh78} L.A. Zadeh, Fuzzy sets as a basis for a theory of possibility, Fuzzy Sets and  Systems, 1 (1978) 3-28.

\end{thebibliography}
\end{document}